\newtheorem{theorem}{Theorem}[section]
\newtheorem{definition}[theorem]{Definition}
\newtheorem{proposition}[theorem]{Proposition}
\newtheorem{remark}[]{Remark}
\newtheorem{assumption}[]{Assumption}
\newtheorem{example}{Example}
\definecolor{CBLUE}{RGB}{0,114,189}
\definecolor{CRED}{RGB}{217,83,25}
\definecolor{CYELLOW}{RGB}{237,177,32}
\definecolor{CPURPLE}{RGB}{126,47,142}
\begin{document}

\title{How Many Grid-Forming Converters Do We Need?
A Perspective From Small Signal Stability and Power Grid Strength}

\author{Huanhai~Xin, Chenxi~Liu, Xia~Chen, Yuxuan Wang, Eduardo~Prieto-Araujo, and Linbin~Huang \vspace{-5mm}
\thanks{This work was jointly supported by the National Nature Science Foundation of China (No. U2166204 and No. 51922094). \textit{(Corresponding author: Linbin Huang.)}}
\thanks{H. Xin, C. Liu, and Y. Wang are with the College of Electrical Engineering, Zhejiang University, Hangzhou, China. (Email: xinhh@zju.edu.cn)}
\thanks{X. Chen is with the School of Electrical and Electronic Engineering, Huazhong University of Science and Technology, Wuhan 430074, China.}
\thanks{E. Prieto-Araujo is with the Electrical Engineering Department, Technical University of
Catalonia, 08034 Barcelona, Spain.}
\thanks{L. Huang is with the Department of Information Technology and Electrical Engineering at ETH Zürich, Switzerland. (Email: linhuang@ethz.ch)}}


\maketitle


\begin{abstract}

Grid-forming (GFM) control has been considered a promising solution for accommodating large-scale power electronics converters into modern power grids thanks to its grid-friendly dynamics, in particular, \textit{voltage source behavior} on the AC side. 
The voltage source behavior of GFM converters can provide voltage support for the power grid, and therefore enhance the power grid (voltage) strength.
However, grid-following (GFL) converters can also perform constant AC voltage magnitude control by properly regulating their reactive current, which may also behave like a voltage source. Currently, it still remains unclear what are the essential differences between the voltage source behaviors of GFL and GFM converters, and which type of voltage source behavior can enhance the power grid strength. In this paper, we will demonstrate that only GFM converters can provide effective voltage source behavior and enhance the power grid strength in terms of small signal dynamics.
Based on our analysis, we further study the problem of how to configure GFM converters in the grid and how many GFM converters we will need. We investigate how the capacity ratio between GFM and GFL converters affects the equivalent power grid strength and thus the small signal stability of the system. We give guidelines on how to choose this ratio to achieve a desired stability margin. 
We validate our analysis using high-fidelity simulations.

\end{abstract}

\begin{IEEEkeywords}
Grid strength, grid-forming converters, small signal stability, short-circuit ratio, voltage source behaviors.
\end{IEEEkeywords}

\vspace{-1mm}
\section{Introduction}

To achieve \textit{net zero}, the large-scale integration of power electronics converters into power systems is inevitable, as they act as grid interfaces of renewable energy sources~\cite{milano2018foundations, kroposki2017achieving}. Currently, most converters apply phase-locked loops (PLLs) in practice, which passively follow the grid frequency, also known as grid-following (GFL) control. However, it has been widely recognized that GFL control cannot support the large-scale integration of converters, because i) the power grid needs some sources to establish the voltage and frequency, and ii) GFL converters may induce instability in weak grids, i.e., power grids with low short circuit ratio (SCR)~\cite{fan2018wind, huang2019grid, gu2022power, song2022novel}.

By comparison, grid-forming (GFM) converters behave as coupled oscillators in a power network, which can establish their own frequencies and spontaneously synchronize with each other~\cite{wang2020grid, li2022revisiting, yang2020placing}. Moreover, GFM converters can usually adapt to very weak power grids. 
There are many control methods in the literature that have been
widely recognized as GFM control, including synchronverters~\cite{zhong2010synchronverters}, virtual synchronous machines (VSMs)~\cite{d2015virtual, d2013virtual}, droop control~\cite{simpson2013synchronization}, {power synchronization control (using PI controller)~\cite{8495031,5308285}}, virtual oscillator control~\cite{johnson2013synchronization, gross2019effect}, and so on. 
One can choose to directly feedforward the generated voltage vector from GFM scheme into the modulation algorithm~\cite{zhong2010synchronverters}, or through inner cascaded voltage and current control loops~\cite{d2015virtual}, virtual-flux orientation control~\cite{rodriguez2023grid}, etc. Also notice that different implementations may require different current-limiting strategies for protecting the converters and maintaining stability, e.g., based on virtual impedance~\cite{fan2022review}, current reference saturation~\cite{huang2017transient}, or active current limiting block~\cite{rodriguez2023grid}. 
Overall, GFM control methods equip the converter with the ability to establish voltage and frequency for the power grid, enabling, for instance, islanded operation and resynchronization to an external grid~\cite{d2015small, dong2019analysis}.
{In this paper, to ensure the generality of the proposed approach, we consider GFM converters with different implementations, such as droop control, power synchronization control, and VSMs (w/wo reactive power droop control~\cite{6200347}, virtual impedance~\cite{6022775}, and damping enhancement~\cite{9271874,8767907}). We focus on the voltage source behavior of GFM converters which helps improve the system's small signal stability dominated by GFL converters.}
In this context, we conjecture that the combination of GFM and GFL converters can constitute a resilient power grid, where GFL converters follow the frequency/voltage established by GFM converters. 

Another major advantage of GFM converters is their voltage source behavior, namely, they generally behave as voltage sources rather than current sources thanks to the AC voltage control loop (cascaded with current control loop)~\cite{d2015virtual, rocabert2012control, huang2017transient}. Some GFM controls directly give control commands of voltage magnitude and angle to the modulation block without a cascaded voltage and current control loop, which also have voltage source behaviors~\cite{huang2020damping, liu2016enhanced, zhong2010synchronverters}. The voltage source behavior of GFM converters enables fast voltage support for power grids and enhances the power grid strength in terms of \textit{small signal dynamics}~\cite{yang2020placing}. Note that under large disturbances, GFM converters will become current sources due to the current limitation, and their ability of enhancing power grid strength under such circumstances still remains to be investigated thoroughly~\cite{huang2017transient}. This paper particularly focuses on how the voltage source behavior of GFM converters enhances the power grid (voltage) strength with regard to small signal dynamics. 

Intuitively, since GFM converters behave like voltage sources, installing a GFM converter near a GFL converter should improve the local power grid strength of the GFL converter and thus improve its small signal stability margin (as GFL converters may become unstable in weak grids). This intuition was confirmed in our previous work~\cite{yang2020placing}, where we investigated the impact of GFM converters on the small signal stability of power systems integrated with GFL converters. We demonstrated that replacing GFL converters with GFM converters is equivalent to enhancing the power grid strength, characterized by the so-called generalized short-circuit ratio (gSCR). {However, the approach~\cite{yang2020placing} can only be used to determine the optimal locations to replace GFL converters with GFM converters, but it still remains unclear how to configure newly installed GFM converters in the grid and \textit{more importantly, how to decide their capacities} (or equivalently, how many GFM converters we will need) to ensure the system's small signal stability. Furthermore, the analysis in ~\cite{yang2020placing} only considers one type of GFM control (i.e., VSM) and directly approximates a VSM as an ideal voltage source (without deriving the equivalent impedance as will be done in this paper). Such an approach might not apply to other GFM methods once they have weaker voltage source behaviors than VSMs in~\cite{yang2020placing}, as it remains unclear how to quantify the voltage source behaviors of different GFM methods and analyze their interaction with GFL converters.
}

Moreover, one important question is: since GFL converters can perform constant AC voltage magnitude control, do they also have effective voltage source behaviors to enhance the power grid strength? To be specific, one can introduce the terminal voltage magnitude as a feedback signal to generate the reactive current reference and regulate the voltage magnitude to a reference value~\cite{fan2018wind, huang2019grid}. In this case, though the terminal voltage magnitude is well regulated, it remains unclear if the GFL converters can be considered as effective voltage sources to enhance the power grid (voltage) strength. We believe that it is essential to answer the above question before studying how many GFM converters we will need to enhance the power grid strength, as one may simply resort to modifying GFL converters to enable voltage source behaviors if they can be used to enhance the power grid strength.

This paper aims at answering the above questions. Firstly, we compare the \textit{dynamical} admittance/impedance models of GFL and GFM converters. Note that a large admittance (or equivalently, a small impedance) indicates that the converter's behavior is closer to a voltage source. To make a fair comparison, we consider the scenario where both GFL control and GFM control aim at regulating the AC voltage and active power. We show that only GFM control can provide effective voltage source behaviors {even under different implementations}, which justifies the necessity of installing GFM converters.
On this basis, we investigate the problem of how many GFM converters are needed to enhance power grid strength.
We review the relationship between power grid strength and the small signal stability of a multi-converter system. By explicitly deriving how the integration of GFM converters affects the power grid strength, we link the capacity of GFM converters to the stability of a GFM-GFL hybrid system.
Then, we give recommendations for the capacity ratio between GFM and GFL converters to satisfy a (prescribed) desired stability margin. Our analysis sheds some light on the question of how many GFM converters we will need from the perspective of power grid strength and small signal stability.
The contributions of this paper can be summarized as follows:

1) The voltage source behaviors of GFL and GFM converters are mathematically compared and quantified in a wide frequency range, which shows only GFM converters present effective voltage source behaviors to enhance the power grid strength with regard to the small signal stability issues{, even under different GFM implementations such as droop control, power synchronization control, and VSMs}. We prove that even with constant voltage magnitude control, GFL converters cannot behave as effective voltage sources, which justifies the necessity of using GFM converters to improve the power grid strength.

2) The effective voltage source behaviors of GFM converters {under different GFM implementations} are rigorously linked to the power grid strength when studying small signal stability issues. In particular, we show that GFM converters can equivalently increase the power grid strength in the frequency range of PLL instabilities, thereby improving the overall small signal stability of the system.

3) Based on 1) and 2), {we obtain the closed-form relationship between the power grid strength and the capacity ratio between the GFM and the GFL converters. On this basis,}
we investigate the problem of how many GFM converters we will need to improve the small signal stability of a multi-converter system and satisfy a prescribed stability margin. This problem has not been studied in the literature so far and is significant for the configuration of GFM converters in future power systems.

The rest of this paper is organized as follows: Section II compares the voltage source behaviors of GFL and GFM converters. Section III reviews the connection between power grid strength and small signal stability. Section IV investigates how GFM converters enhance power grid strength and how many GFM converters are needed to satisfy a prescribed stability margin. Simulation results are given in Section V. Section VI concludes the paper.

\section{Voltage Source Behaviors of GFL and GFM Converters}
\label{sec:II}
\begin{figure}[!t]
	\centering
	\includegraphics[width=3.44in]{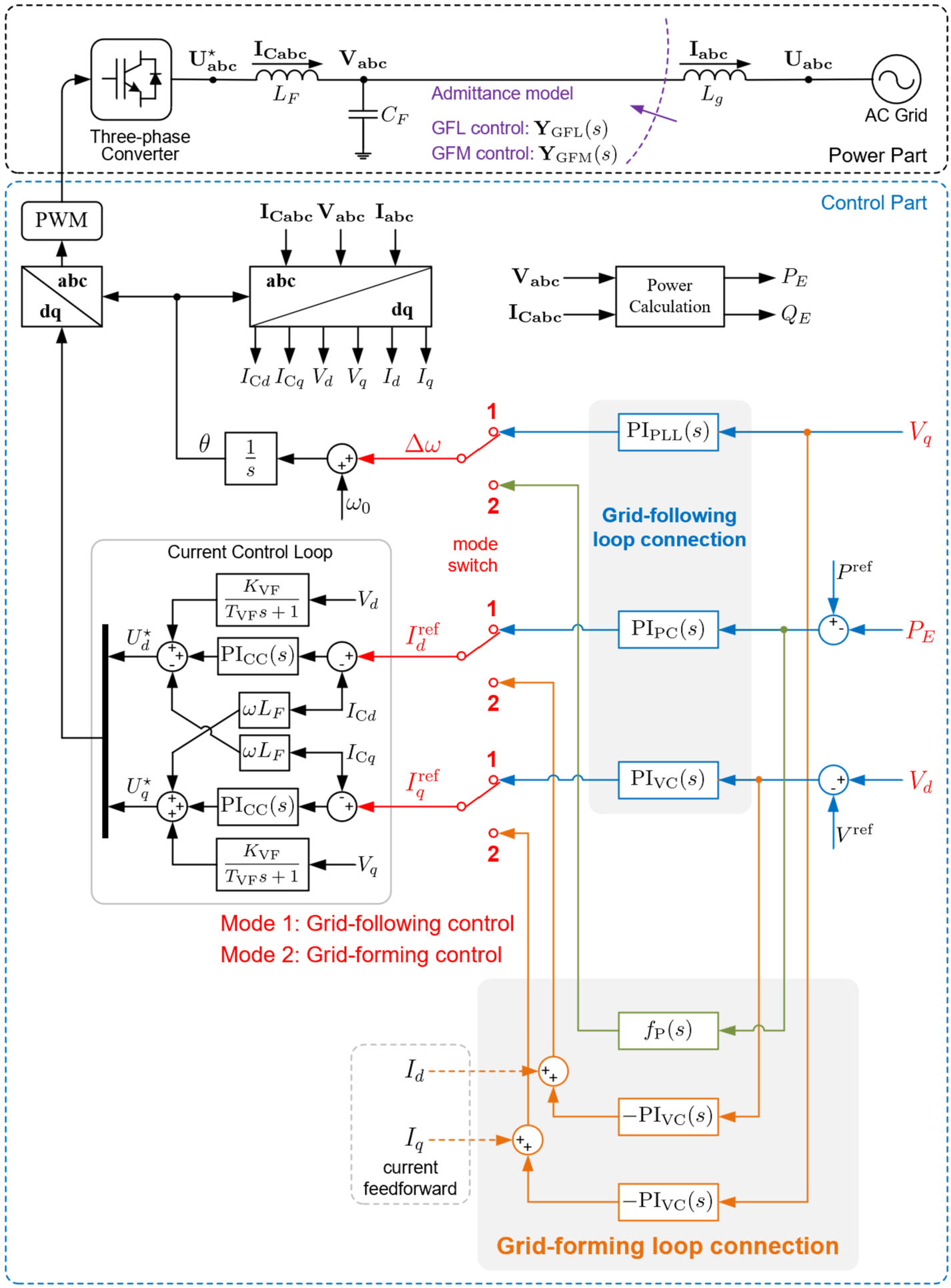}
	\vspace{-1mm}
	\caption{A grid-connected three-phase power converter. Control mode~1: GFL control (i.e., PLL-based control). Control mode~2: GFM control (applying VSM scheme). Here GFL control and GFM control share the same control objectives, i.e., regulating active power $P_E$ to its reference $P^{\rm ref}$, regulating the $q$-axis voltage $V_q$ to 0, and regulating the $d$-axis voltage $V_d$ to the voltage magnitude reference $V^{\rm ref}$; the only difference between GFL and GFM control is the way of connecting different control loops. For instance, in GFL mode, the frequency deviation $\Delta \omega$ comes from the $q$-axis voltage control loop (i.e., PLL), while in GFM mode, $\Delta \omega$ comes from the active power control loop.}
	\vspace{-1mm}
	\label{Fig_control_diagram_GFM_GFL}
\end{figure}

\begin{figure}[!t]
	\centering
	\includegraphics[width=2.8in]{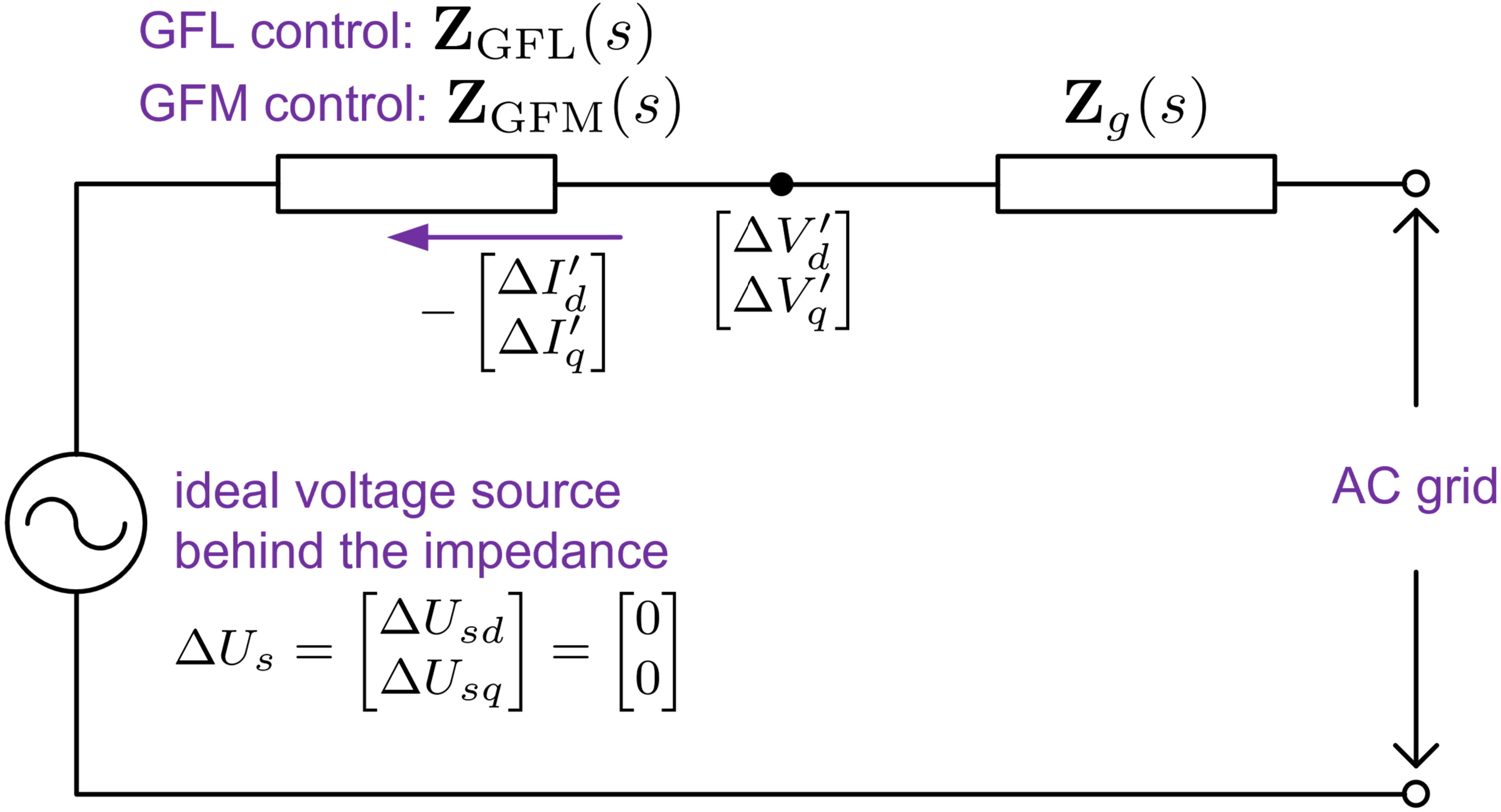}
	\vspace{-1mm}
	\caption{Impedance model of a grid-connected converter.}
	\vspace{-2mm}
	\label{Fig_impedance_model}
\end{figure}

In this section, we demonstrate how the voltage source behavior of GFL and GFM converters can be characterized by their impedance models. 

\subsection{Control structures of GFL and GFM converters}

Consider a three-phase power converter that is connected to an AC grid (an infinite bus), as shown in Fig.~\ref{Fig_control_diagram_GFM_GFL}. The converter can be operated either in GFL mode (Mode~1 in Fig.~\ref{Fig_control_diagram_GFM_GFL}) or in GFM mode (Mode~2 in Fig.~\ref{Fig_control_diagram_GFM_GFL}). {Due to space limitations and without loss of generality, we firstly consider VSMs without additional control methods as a prototypical GFM control, and the control diagrams and analysis considering other GFM methods are shown in Appendix A and the following sections, respectively.}

In {\bf GFL} mode, the converter applies a conventional synchronous reference frame (SRF) PLL to realize grid synchronization, which uses the $q$-axis voltage $V_q$ to generate the internal frequency (or frequency deviation $\Delta \omega$)~\cite{golestan2014conventional, huang2019grid}. A current control loop is used for fast current control. The $d$-axis current reference $I_d^{\rm ref}$ comes from an active power control loop, which regulates the active power $P_E$ to its reference value $P^{\rm ref}$. The $q$-axis current reference $I_q^{\rm ref}$ comes from an AC voltage magnitude control loop, which regulates the $d$-axis voltage $V_d$ ($V_d$ is the voltage magnitude at steady state) to the reference $V^{\rm ref}$. Note that $I_q^{\rm ref}$ can also come from a reactive power control loop, whilst in this paper, we aim at investigating if GFL converters can behave like a voltage source when it performs constant AC voltage control. The GFL control law is then summarized as
\begin{equation}\label{eq:GFL_law}
\begin{bmatrix} \Delta \omega \\ I_d^{\rm ref} \\ I_q^{\rm ref} \end{bmatrix} = \underbrace{\begin{bmatrix} {\rm PI}_{\rm PLL}(s) & 0 & 0 \\ 
0 & {\rm PI}_{\rm PC}(s) & 0 \\ 
0 & 0 & {\rm PI}_{\rm VC}(s) \end{bmatrix}}_{{\bf K}_{\rm GFL}(s)}
\begin{bmatrix} V_q \\ P^{\rm ref}-P_E \\ V_d-V^{\rm ref} \end{bmatrix} .
\end{equation}
where ${\rm PI}_{\rm PLL}(s)$, ${\rm PI}_{\rm PC}(s)$, and ${\rm PI}_{\rm VC}(s)$ are respectively the transfer functions of the PI regulators in the PLL, active power control loop, and voltage magnitude control loop.

In {\bf GFM} mode, the internal frequency may come from droop control or an emulated swing equation, which uses the active power signal to generate the frequency~\cite{d2013equivalence}. Note that one can also consider DC voltage control as a special type of (indirect) active power control to achieve similar functionalities~\cite{huang2017virtual, arghir2018grid}. To enable a voltage source behavior, an outer AC voltage control loop is cascaded with the inner current control loop to regulate the $d$-axis and $q$-axis voltages~\cite{d2015virtual, huang2017transient}. To be specific, the $d$-axis and $q$-axis current references respectively come from the $d$-axis and $q$-axis voltage control loops. Moreover, one may introduce the grid-side current as feedforward signals to improve the control performance, as shown in Fig.~\ref{Fig_control_diagram_GFM_GFL}. The GFM control law (ignoring current feedforward) can be summarized as 
\begin{equation}\label{eq:GFM_law}
\begin{bmatrix} \Delta \omega \\ I_d^{\rm ref} \\ I_q^{\rm ref} \end{bmatrix} = \underbrace{\begin{bmatrix} 0 & f_{\rm P}(s) & 0 \\ 
0 & 0 & -{\rm PI}_{\rm VC}(s) \\ 
-{\rm PI}_{\rm VC}(s) & 0 & 0 \end{bmatrix}}_{{\bf K}_{\rm GFM}(s)}
\begin{bmatrix} V_q \\ P^{\rm ref}-P_E \\ V_d-V^{\rm ref} \end{bmatrix} ,
\end{equation}
where $f_{\rm P}(s)$ is the synchronization control law, and a typical choice is $f_{\rm P}(s) = \frac{1}{Js+D}$ as in VSMs ($J$ and $D$ are respectively the virtual inertia and virtual damping coefficients).

By comparing~\eqref{eq:GFL_law} and~\eqref{eq:GFM_law} (or observing the control loop connections in Fig.~\ref{Fig_control_diagram_GFM_GFL}), one can deduce that in our setting, the difference between GFL control and GFM control is characterized by the structural difference between control matrices ${{\bf K}_{\rm GFL}(s)}$ and ${{\bf K}_{\rm GFM}(s)}$. That is, one can view GFL control and GFM control as connecting control loops in different ways; this observation is consistent with~\cite{huang2020h}. For instance, in GFL control, the $q$-axis voltage control loop (i.e., PLL) is used to generate the frequency, while in GFM control, the $q$-axis voltage control loop is used to generate the $q$-axis current reference (i.e., connected to the $q$-axis current control loop). Under our setting, the GFL control and GFM control share exactly the same objectives, i.e., regulating the $q$-axis voltage to 0, regulating the $d$-axis voltage to $V^{\rm ref}$, and regulating the active power $P_E$ to its reference $P^{\rm ref}$. Hence, they achieve the same equilibrium at a steady state (assuming that the grid frequency remains constant), but the dynamic behaviors during the transient are different. Note that under our setting, the GFL control also aims at regulating the $q$-axis voltage (via PLL) and the $d$-axis voltage (via reactive current control), just like GFM controls. However, it is not clear if such an implementation of GFL control can equip the converter with effective voltage source behaviors. We investigate this issue in what follows.
\label{sec:GFM_GFL_structure}

\subsection{Admittance model and voltage source behaviors}

\label{sec:admittance}

Impedance/admittance modeling and analysis have been popular in studying power converter's dynamics thanks to its compact representation of the interaction among different control loops, filters, and the power grid~\cite{sun2011impedance, wang2017unified, wen2015analysis, rygg2016modified}. In this paper, we use impedance/admittance modeling to study the converter's dynamics and analyze its voltage source behavior. Fig.~\ref{Fig_impedance_model} shows the $dq$ impedance model (in the global $dq$ coordinate under the nominal frequency) of the grid-connected converter in Fig.~\ref{Fig_control_diagram_GFM_GFL}, where ${\bf Z}_g(s)$ is the impedance model of the grid-side inductor $L_g$, and ${\bf Z}_{\rm GFL}(s)$ (${\bf Z}_{\rm GFM}(s)$) is the impedance model of the converter when it applies GFL control (GFM control); see the detailed derivations provided in Appendices~\ref{Appx:B} and~\ref{Appx:C}.
Note that ${\bf Z}_g(s)$, ${\bf Z}_{\rm GFL}(s)$, and ${\bf Z}_{\rm GFM}(s)$ are all $2 \times 2$ transfer function matrices.
The impedance ${\bf Z}_{\rm GFL}(s)$ (or ${\bf Z}_{\rm GFM}(s)$ if the converter applies GFM control) represents the ``distance'' between the point of voltage control (i.e., the capacitor voltage $V'_{dq}$) and an ideal voltage source behind the impedance (whose voltage deviation in the global coordinate is $\Delta U_s = 0$), as shown in Fig.~\ref{Fig_impedance_model}, i.e., 
\begin{equation}\label{eq:GFL_Z}
\begin{bmatrix} \Delta V'_d \\ \Delta V'_q \end{bmatrix} = - {\bf Z}_{\rm GFL}(s) \begin{bmatrix} \Delta I'_d \\ \Delta I'_q \end{bmatrix},
\end{equation}
where $\Delta V'_d$ and $\Delta V'_q$ are respectively the $d$-axis and $q$-axis capacitor voltage deviations in the global coordinate; $\Delta I'_d$ and $\Delta I'_q$ are respectively the $d$-axis and $q$-axis grid-side current deviations in the global coordinate.
If ${\bf Z}_{\rm GFL}(s) = 0$, then we have $\begin{bmatrix} \Delta V'_d & \Delta V'_q \end{bmatrix}^\top = 0$. In this case, the converter can be viewed as a perfect voltage source at the point of voltage control, since the voltage deviation is always zero no matter how the current changes. Based on~\eqref{eq:GFL_Z}, one can derive the corresponding admittance matrix as ${\bf Y}_{\rm GFL}(s) = {\bf Z}^{-1}_{\rm GFL}(s)$. If the converter applies GFM control, then one can replace ${\bf Z}_{\rm GFL}(s)$ by ${\bf Z}_{\rm GFM}(s)$ in~\eqref{eq:GFL_Z} to represent the converter's dynamics; the corresponding admittance model is ${\bf Y}_{\rm GFM}(s) = {\bf Z}^{-1}_{\rm GFM}(s)$ (see also the detailed derivation process provided in Appendices~\ref{Appx:C}). The difference between ${\bf Y}_{\rm GFL}(s)$ and ${\bf Y}_{\rm GFM}(s)$ results from the difference of GFL control and GFM control (see~\eqref{eq:GFL_law} and~\eqref{eq:GFM_law}). These admittance/impedance models fully capture the dynamics of all the control loops and the filters (capacitor and converter-side inductor). 

If ${\bf Z}_{\rm GFL}(s)$ (or ${\bf Z}_{\rm GFM}(s)$ if applying GFM control) is ``large'', then the converter's behavior is far away from a voltage source at the point of voltage control. However, it is nontrivial to tell if ${\bf Z}_{\rm GFL}(s)$ is ``large'' since it is a $2 \times 2$ transfer function matrix rather than a scalar. We deal with this issue in the following subsection.

\subsection{``Dimension'' of voltage source behaviors}

When the converter applies GFL control, its admittance model in the global $dq$ coordinate is
\begin{equation}\label{eq:Y_PLL_global}
{\bf Y_{\rm GFL}}(s) = {\bf Y_{\rm CL}}(s) + \begin{bmatrix} 
{\bf Y_{\rm PC}}(s) &
0 \\
{\bf Y_{\rm VC}}(s) &
{\bf Y_{\rm Sync}}(s)
\end{bmatrix}  \,,
\end{equation}
where ${\bf Y_{\rm CL}}(s) = \begin{bmatrix} s C_F &  - \omega C_F \\ \omega C_F & s C_F \end{bmatrix}$ is the admittance matrix of the capacitor, and
\begin{equation*}
\begin{split}
{\bf Y_{\rm PC}}(s) &= \frac{{{G_I}(s){\rm PI}_{\rm PC}(s){I_{Cd0}} + {Y_{\rm VF}}(s)}}{{{\rm{1 + }}{G_I}(s){\rm PI}_{\rm PC}(s){V_{d0}}}}  , \\
{\bf Y_{\rm VC}}(s) &= -G_I(s) {\rm PI}_{\rm VC}(s)  , \\
{\bf Y_{\rm Sync}}(s) &= \frac{sY_{\rm VF}(s) - {\rm PI}_{\rm PLL}(s)I_{Cd0}}{s+{\rm PI}_{\rm PLL}(s)V_{d0}}  .
\end{split}
\end{equation*}
In the above, $I_{Cd0}$ and $V_{d0}$ are respectively the steady-state values of the $d$-axis current and $d$-axis voltage; $G_I(s)$ represents tracking dynamics of the current control loop and $Y_{\rm VF}(s)$ represents the dynamics of voltage feedforward. We have $G_I(s) \approx 1$ and $Y_{\rm VF}(s) \approx 0$ if the current control loop has sufficiently high bandwidth. 

When the converter applies GFM control, its admittance model in the global $dq$ coordinate is
\begin{equation}\label{eq:GFM_global}
{\bf Y_{\rm GFM}}(s) = {\bf Y_{\rm CL}}(s) +  \begin{bmatrix} Y_0(s) & 0 \\ Y_{\rm Swing}(s)  & Y_0(s) \end{bmatrix} \,,
\end{equation}
where 
\begin{equation}\label{eq:Y0_GFM_repeated}
\begin{split}
Y_0(s) &= \frac{Y_{\rm VF}(s) + G_I(s){\rm PI}_{\rm VC}(s) + G_I(s) s C_F   }{1 - G_I(s)} , \\
Y_{\rm Swing}(s) &= \frac{I_{Cd0}^2 - Y_0^2(s)V_{d0}^2}{Js^2+Ds} .
\end{split}
\end{equation}
We provide detailed derivations of ${\bf Y}_{\rm GFL}(s)$ and ${\bf Y}_{\rm GFM}(s)$ in Appendices~\ref{Appx:B} and~\ref{Appx:C}, respectively, where to simplify the expressions, we assume that 1) the global $dq$ coordinate is aligned with the controller's $dq$ coordinate at steady state, and 2) the steady-state value of the reactive current is zero.

In~\eqref{eq:GFM_global}, the diagonal elements $Y_0(s)$ represent the equivalent admittance caused by the $d$-axis and $q$-axis voltage control, and we have $Y_0(s) \rightarrow \infty$ when the voltage control is sufficiently fast (i.e., the PI parameters in ${\rm PI}_{\rm VC}(s)$ are sufficiently large). The non-zero off-diagonal element $Y_{\rm Swing}(s)$ describes the dynamics of the swing equation combined with fast voltage control. The limit of $Y_{\rm Swing}(s)$ depends on the virtual inertia and virtual damping (i.e., $J$ and $D$). For instance, if the converter can provide a sufficient amount of virtual inertia, one may obtain $Y_{\rm Swing}(s) \rightarrow 0$; if very little virtual inertia and damping can be provided, one may achieve $Y_{\rm Swing}(s) \rightarrow \infty$. Hence, we do not assign a limit to $Y_{\rm Swing}(s)$, while we notice that if one can achieve perfect voltage control (with a sufficiently high control bandwidth), ${\bf Y_{\rm GFM}}(s)$ approaches
\begin{equation}\label{eq:GFM_perfect_case}
{\bf Y_{\rm GFM}}(s) \rightarrow {\bf Y_{\rm CL}}(s) +  \begin{bmatrix} \infty & 0 \\ Y_{\rm Swing}(s)  & \infty \end{bmatrix} \,,
\end{equation}

The elements in ${\bf Y_{\rm GFL}}(s)$ also have clear interpretation: 1) ${\bf Y_{\rm PC}}(s)$ describes the tracking dynamics of active power, and ${\bf Y_{\rm PC}}(s) \rightarrow {I_{Cd0}} / {V_{d0}}$ when the active power control loop is sufficiently fast (i.e., the PI parameters in ${\rm PI}_{\rm PC}(s)$ are sufficiently large); 2) ${\bf Y_{\rm VC}}(s)$ describes the $d$-axis voltage tracking, and we have ${\bf Y_{\rm VC}}(s) \rightarrow \infty$ when the $d$-axis voltage control loop is sufficiently fast (i.e., the PI parameters in ${\rm PI}_{\rm VC}(s)$ are sufficiently large); and 3) ${\bf Y_{\rm Sync}}(s)$ describes the PLL tracking (synchronization) dynamics, and ${\bf Y_{\rm Sync}}(s) \rightarrow -{I_{Cd0}} / {V_{d0}}$ when the PLL (i.e., $q$-axis voltage control loop) is sufficiently fast (i.e., the PI parameters in ${\rm PI}_{\rm PLL}(s)$ are sufficiently large) to achieve satisfactory GFL functionality. Hence, in an ideal case that achieves perfect active power tracking, perfect $d$-axis and $q$-axis voltage tracking, the admittance of a GFL converter approaches
\begin{equation}\label{eq:GFL_perfect_case}
{\bf Y_{\rm GFL}}(s) \rightarrow {\bf Y_{\rm CL}}(s) + \begin{bmatrix} 
{I_{Cd0}} / {V_{d0}} &
0 \\
\infty &
-{I_{Cd0}} / {V_{d0}}
\end{bmatrix}  .
\end{equation}
This ideal case is not implementable since the control bandwidth is limited in practice, while it reflects how the control objectives affect the admittance matrix through  GFL control.

We can observe that there are {\em two} infinity elements in~\eqref{eq:GFM_perfect_case} while there is only {\em one} infinity element in~\eqref{eq:GFL_perfect_case}. This difference results from the structural difference between GFM control and GFL control, even though they share the same control objectives (e.g., $d$-axis and $q$-axis voltage control); see Section~\ref{sec:GFM_GFL_structure}. According to Section~\ref{sec:admittance}, we require the impedance matrix being sufficiently ``small'', or equivalently, the admittance matrix being sufficiently ``large'' such that the converter's behavior is close to a voltage source. Since the two infinity elements appear on the diagonal of ${\bf Y_{\rm GFM}}(s)$ in~\eqref{eq:GFM_perfect_case}, we refer to it as {\bf ``two-dimensional'' voltage source \text{(2D-VS)} behavior}; since there is only one infinity element in ${\bf Y_{\rm GFL}}(s)$ in~\eqref{eq:GFL_perfect_case}, we refer to it as {\bf ``one-dimensional'' voltage source \text{(1D-VS)} behavior}. The 2D-VS indicates that the voltage source behavior is well maintained regardless of the direction of the current vector (with sufficiently small $Y_{\rm Swing}(s)$); by comparison, the 1D-VS indicates that the voltage source behavior can only be maintained with regard to one direction of the current vector, namely, there exists one direction of current perturbation that renders the behavior of the converter far away from a voltage source. Hence, only \text{2D-VS} behavior is an effective voltage source behavior in terms of enhancing power grid strength.

These properties can be mathematically validated by performing {\em singular value decomposition} to the admittance matrix or the impedance matrix. 
To be specific, we can focus on the smallest singular value of the admittance matrix, or equivalently, the largest singular value of the impedance matrix, which corresponds to the weakest direction of the voltage source behavior. 
Let ${\bf Z}(s)$ denote the converter's impedance matrix; we have ${\bf Z}(s) = {\bf Z}_{\rm GFL}(s) = {\bf Y}^{-1}_{\rm GFL}(s)$ in GFL mode and ${\bf Z}(s) = {\bf Z}_{\rm GFM}(s) = {\bf Y}^{-1}_{\rm GFM}(s)$ in GFM mode. The largest singular value of ${\bf Z}(s)$, denoted by $\bar \sigma({\bf Z}(s))$, satisfies
\begin{equation}
\begin{split}
\forall \omega: \bar \sigma({\bf Z}(j \omega)) &= \displaystyle \mathop{\max} \limits_{\Delta {\bf I}'_{dq}(j \omega) \ne 0} \frac{\| {\bf Z}(j \omega) \Delta {\bf I}'_{dq}(j \omega) \|_2}{\| \Delta {\bf I}'_{dq}(j \omega) \|_2} \\
&= \displaystyle \mathop{\max} \limits_{\Delta {\bf I}'_{dq}(j \omega) \ne 0} \frac{\| \Delta {\bf V}'_{dq}(j \omega) \|_2}{\| \Delta {\bf I}'_{dq}(j \omega) \|_2},
\end{split}
\end{equation}
where $\Delta {\bf I}'_{dq}(j \omega) = \begin{bmatrix} \Delta I'_d(j \omega) \\ \Delta I'_q(j \omega) \end{bmatrix}$, $\Delta {\bf V}'_{dq}(j \omega) = \begin{bmatrix} \Delta V'_d(j \omega) \\ \Delta V'_q(j \omega) \end{bmatrix}$, and $\| \cdot \|_2$ denotes the $\ell^2$ norm.

\begin{remark}
A large $\bar \sigma({\bf Z}(s))$ indicates that ${\bf Z}(s)$ is ``large'' in a certain direction where the converter's behavior is far away from a voltage source. In other words, the largest singular value of the impedance matrix can be used to tell how far the converter's behavior is away from a voltage source; the smaller $\bar \sigma({\bf Z}(s))$ is, the closer the converter is to a voltage source.
Moreover, $\bar \sigma({\bf Z}_{\rm GFL}(s))$ is generally larger than $\bar \sigma({\bf Z}_{\rm GFM}(s))$ due to the 1D-VS behavior of GFL converters and 2D-VS behavior of GFM converters, especially within the bandwidth of voltage control. Hence, only GFM converters can provide effective voltage source behaviors to enhance the power grid strength.
\end{remark}

The above analysis is consistent with our previous work~\cite{yang2020placing} based on matrix perturbation theory, which shows that ${\bf Z}_{\rm GFM}(s)$ is ``small'' in the direction of the eigenvector of ${\bf Y}_{\rm GFL}(s)$ (pertinent to the PLL instability problem). In fact, our analysis in the above is more general, as it shows that ${\bf Z}_{\rm GFM}(s)$ is ``small'' in any direction.
Fig.~\ref{Fig_SV_GFM_GFL} plots $\bar \sigma({\bf Z}_{\rm GFL}(s))$ and $\bar \sigma({\bf Z}_{\rm GFM}(s))$ under typical control parameters (given in the Appendices of this paper). 
It can be seen that $\bar \sigma({\bf Z}_{\rm GFM}(s))$ is much smaller than $\bar \sigma({\bf Z}_{\rm GFL}(s))$ in the frequency range of interest (i.e., around 5 $\sim$ 150Hz as this paper focuses on small signal stability problems caused by converter control). Note that one can obtain an even smaller $\bar \sigma({\bf Z}_{\rm GFM}(s))$ by increasing the voltage control bandwidth and the virtual inertia. 

\begin{figure}[!t]
	\centering
	\includegraphics[width=3.4in]{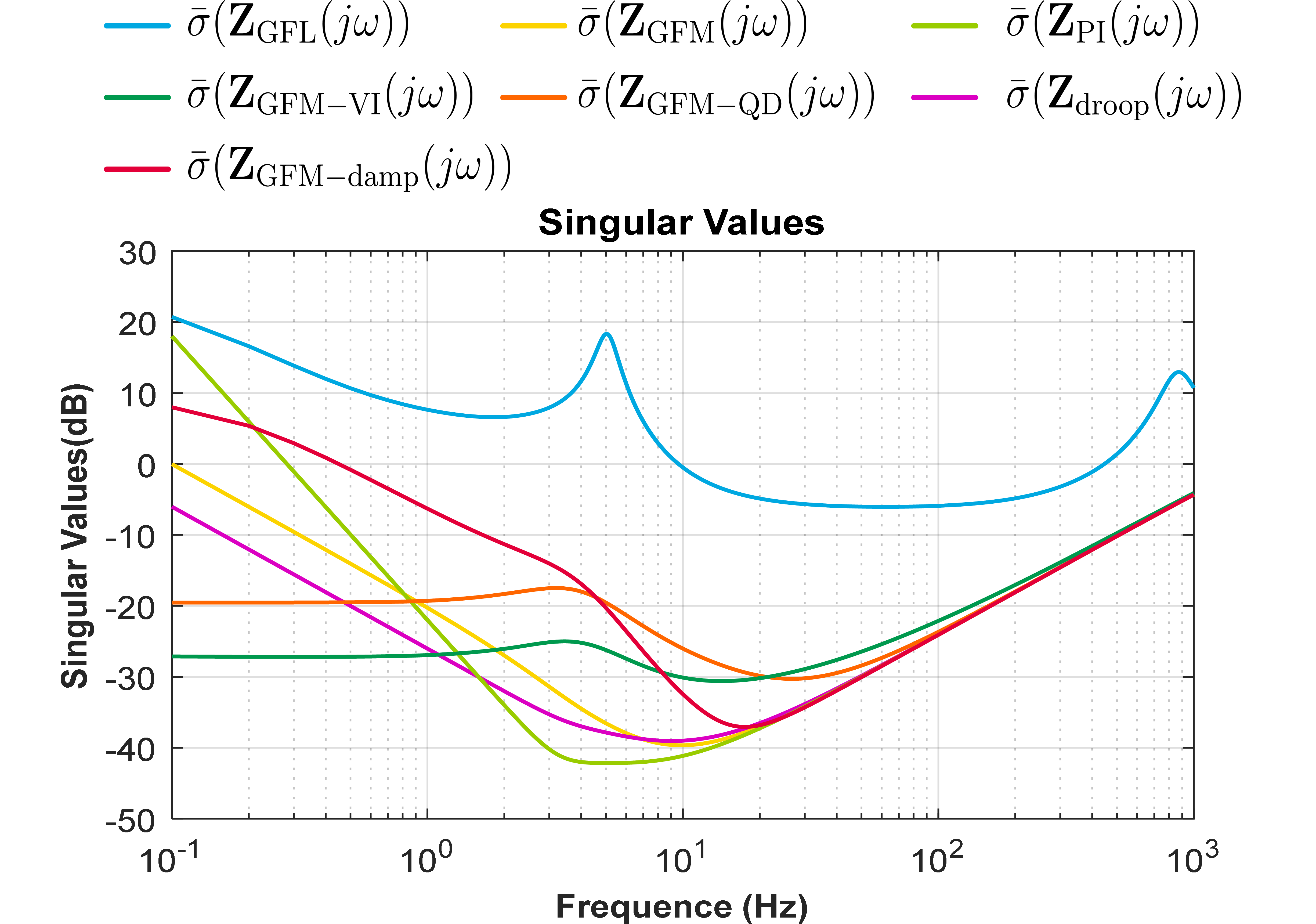}
	\vspace{-3mm}
	\caption{The largest singular values of the impedance matrix in GFL mode (i.e., $\bar \sigma({\bf Z}_{\rm GFL}(s))$) and in GFM mode {(i.e., $\bar \sigma({\bf Z}_{\rm GFM}(s))$ corresponds to VSMs without additional control methods where virtual inertia $J=20~{\rm pu}$ and virtual damping $D=500~{\rm pu}$; in addition, $\bar \sigma({\bf Z}_{\rm GFM-QD}(s))$, $\bar \sigma({\bf Z}_{\rm GFM-VI}(s))$, and $\bar \sigma({\bf Z}_{\rm GFM-damp}(s))$ correspond to VSMs with reactive power droop control, virtual impedance, and damping enhancement, respectively; $\bar \sigma({\bf Z}_{\rm droop}(s))$ and $\bar \sigma({\bf Z}_{\rm PI}(s))$ correspond to GFM converters under droop control and power synchronization control, respectively).}} 
	\vspace{0mm}
	\label{Fig_SV_GFM_GFL}
\end{figure}

It numerically confirms that only GFM converters can provide effective voltage source behaviors, as GFL behaviors are far away from voltage sources in the worst-case scenario (i.e., the direction corresponding to the largest singular value). The above analysis confirms that it is necessary to install GFM converters to provide effective voltage source behaviors and enhance the power grid strength, since GFL converters cannot provide effective voltage source behaviors even if they perform constant voltage magnitude control. Now the remaining problem is how many GFM converters we will need, which is investigated in the following sections. Note that with the above analysis, this problem also becomes easier to study, as we already show that GFM converters can be approximated by voltage sources (in both $d$ and $q$ directions).

{In addition, to demonstrate the voltage source behaviors of GFM converters under different implementations, Fig.~\ref{Fig_SV_GFM_GFL} plots the largest singular values of the impedance matrices of different GFM methods, such as droop control, power synchronization control, and VSMs (with reactive power droop control, virtual impedance, and damping enhancement). The control diagrams and parameters of these approaches are given in Appendix A. It can be seen that although the largest singular values of these different GFM converters are slightly different but much smaller than $\bar \sigma({\bf Z}_{\rm GFL}(s))$ in the frequency range of interest. It numerically confirms that in the frequency range of interest, GFM converters under different implementations can all provide effective voltage source behaviors, characterized by relatively small singular values, while the exact values can be different, indicating different equivalent impedance.
} 

The above analysis confirms that it is necessary to install GFM converters to provide effective voltage source behaviors and enhance the power grid strength, since GFL converters cannot provide effective voltage source behaviors even if they perform constant voltage magnitude control. {Note that this paper particularly focuses on the small signal stability problem caused by the GFL converters and weak grids, and the voltage source behaviors of GFM converters can enhance the power grid strength to improve the small signal stability. Therefore, in this context, the smaller ${\bf Z}_{\rm GFM}(s)$ (the larger ${\bf Y}_{\rm GFM}(s)$) is, the stronger voltage source behavior GFM converters will have, which can better improve the small signal stability.} 

In the above setting and analysis, we assume that the DC voltage is well maintained and the GFM converter regulates the active power through the emulated swing equation, which implicitly assumes that the energy storage from the DC side is sufficiently large. In practice, the energy storage from the DC side (e.g., the kinetic energy of the rotating turbine, or the energy stored in a supercapacitor) could be limited, which may affect the equivalent impedance and therefore the voltage source behavior. We consider the impact of limited energy storage as a valuable future direction.

\section{Power Grid Strength and Stability}

To link the voltage source behavior to stability problems, we first revisit the relationship between power grid strength and stability.
It has been widely recognized in power system studies that the system stability is strongly related to the power grid strength, especially when large-scale GFL converters are integrated into the grid~\cite{fan2018wind, huang2019grid, wang2017unified, wen2015analysis}. In a single-device-infinite-bus system, the power grid strength can be effectively characterized by SCR, which reflects the distance between the device and the infinite bus (an ideal voltage source). The concept of SCR was originally proposed for conventional LCC-HVDC transmission systems, where the SCR of the system is compared with the two famous boundary values ``2'' and ``3'' to determine the system strength. To be specific, when ${\rm SCR}> 3$, the system is said to be strong; when $2<{\rm SCR}<3$, the system is said to be weak, where the LCC-HVDC station may encounter static voltage stability problems; when ${\rm SCR}<2$, the system is said to be very weak, and it cannot deliver the rated active power due to the voltage stability constraint~\cite{zhang2017generalized}. 

However, as pointed in~\cite{dong2018small}, when studying GFL converters, the boundary values are no longer ``2'' and ``3'', but depend on the characteristics of the GFL converters (e.g., control parameters), defined as critical SCR (CSCR) in~\cite{dong2018small}.  It has been widely recognized that a GFL converter becomes (small signal) unstable when SCR is lower than a certain value, and this value is exactly CSCR. In other words, CSCR serves as the boundary for the SCR of a single GFL converter that is connected to an infinite bus, and the system becomes unstable when ${\rm SCR} < {\rm CSCR}$. The design of the GFL converter (e.g., with different control parameters) does not affect the calculation of SCR but affects CSCR, namely, a good design leads to a lower CSCR (i.e., better tolerance of low SCR conditions) and therefore improves the system stability.

The characterization of power grid strength becomes nontrivial in a multi-device system. In our previous work~\cite{dong2018small}, we rigorously showed that in terms of the small signal stability of a multi-GFL-converter system, the power grid strength can be characterized by the so-called generalized short-circuit ratio (gSCR), which is an extension of the conventional SCR to assess the stability of multi-converter systems. Note that gSCR is determined by the power network parameters and the capacity of different devices, which can be used to indicate how strong the power grid is in a multi-device setting and is a generalization of the conventional SCR. Moreover, as will be discussed below, the critical value of gSCR is equal to the critical value of a single GFL converter system, i.e., ${\rm CSCR} = {\rm CgSCR}$.
We briefly review the definition of gSCR of a multi-converter system in what follows.

\begin{figure}[!t]
	\centering
	\includegraphics[width=3.4in]{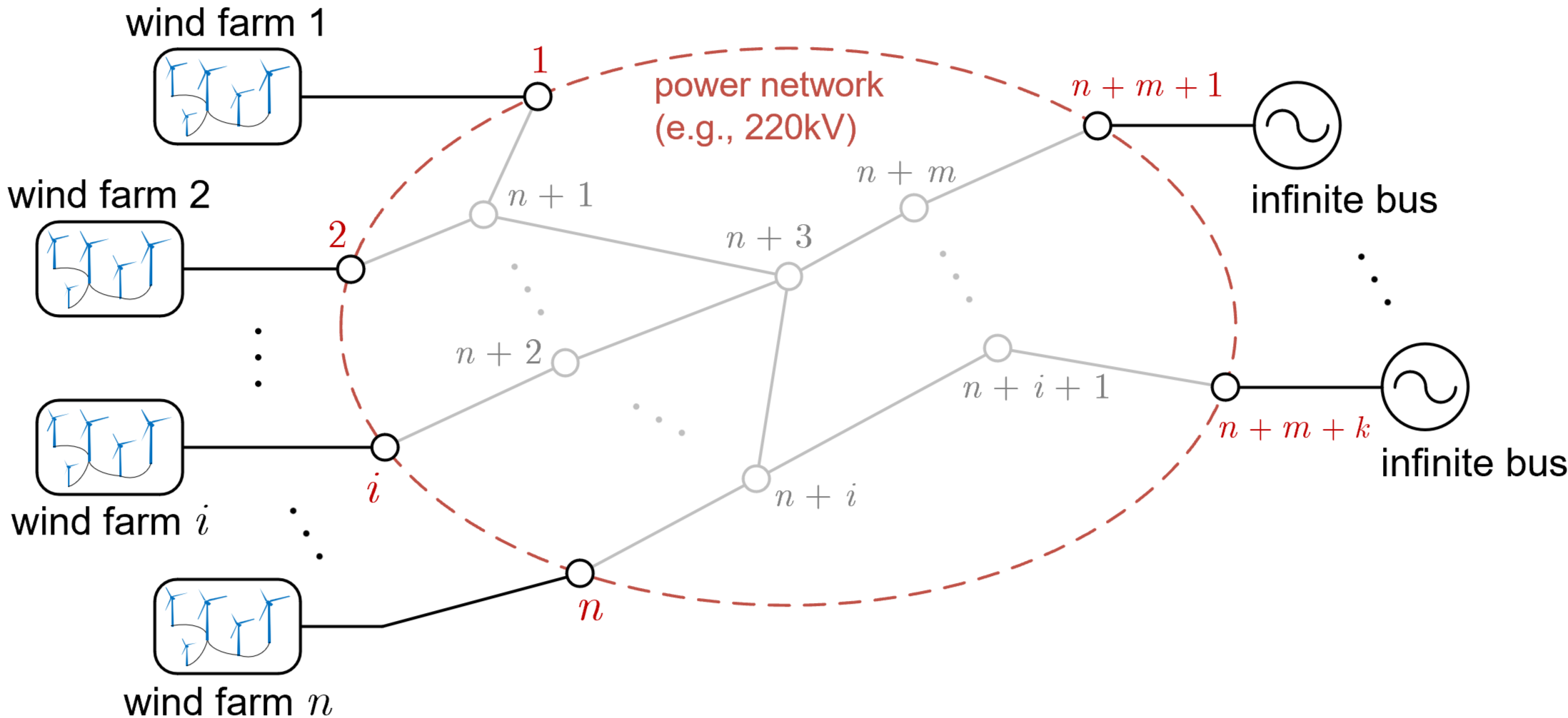}
	\vspace{-3mm}
	\caption{A power grid integrated with multiple wind farms.}
	\vspace{-2mm}
	\label{Fig_wind_farms}
\end{figure}

Though our approach is general, to illustrate the point we consider the integration of $n$ wind farms (Nodes $1 \sim n$) into an inductive power network (with a high X/R ratio), as shown in Fig.~\ref{Fig_wind_farms}. The power network has $m\;(\ge 1)$ interior nodes ($n+1 \sim n+m$) and $k\;(\ge 1)$ infinite buses ($n+m+1 \sim n+m+k$). The infinite buses can be used to represent some large-capacity synchronous generators or other areas. Let $B_{ij}$ be the susceptance between Node~$i$ and Node~$j$ in per-unit values ($B_{ii} = 0$), where the per-unit calculation is based on a global capacity $S_{\rm global}$.
The extended susceptance matrix of the network is ${\bf B} \in \mathbb{R}^{(n+m) \times (n+m)}$, where ${\bf B}_{ij} = -B_{ij} \; (i \ne j)$ and ${\bf B}_{ii} = \sum_{j=1}^{n+m+k} B_{ij}$. The interior nodes can be eliminated by Kron reduction~\cite{dorfler2012kron}, and the Kron-reduced susceptance matrix is ${\bf B}_{\rm r} = {\bf B}_1 - {\bf B}_2{\bf B}_4^{-1}{\bf B}_3$, where ${\bf B} =: \begin{bmatrix} {\bf B}_1 \in \mathbb{R}^{n \times n} & {\bf B}_2 \in \mathbb{R}^{n \times m} \\
{\bf B}_3 \in \mathbb{R}^{m \times n} & {\bf B}_4 \in \mathbb{R}^{m \times m} \end{bmatrix}$.

\begin{definition}[{\bf gSCR}]\label{def:gSCR}
The $\rm gSCR$ of the system in Fig.~\ref{Fig_wind_farms} is defined as the smallest eigenvalue of ${\bf S}_{\rm B}^{-1} {\bf B}_{\rm r}$, where ${\bf S}_{\rm B} \in \mathbb{R}^{n \times n}$ is a diagonal matrix whose $i$-th diagonal element is the ratio between the $i$-th wind farm's capacity $S_i$ and the base capacity of per-unit calculation $S_{\rm global}$, i.e., ${\bf S}_{{\rm B}ii} = S_i/S_{\rm global}$.
\end{definition}

\begin{proposition}[gSCR and stability~\cite{dong2018small}] \label{pro:gSCR}
When all the wind farms in Fig.~\ref{Fig_wind_farms} adopt GFL control and have homogeneous dynamics,
the multi-wind-farm system is (small signal) stable if and only if ${\rm gSCR} > {\rm CgSCR}$. Here ${\rm CgSCR}$ denotes the critical $\rm gSCR$, defined as the value of SCR that renders a wind farm critically stable in a single-wind-farm-infinite-bus system. A larger gSCR indicates a larger stability margin.
\end{proposition}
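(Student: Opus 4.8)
The plan is to diagonalize the problem via the symmetric structure of the reduced network. First I would observe that the Kron-reduced susceptance matrix $\mathbf{B}_{\rm r}$ is symmetric, so the generalized eigenvalue problem $\mathbf{B}_{\rm r}\mathbf{v} = \lambda\,\mathbf{S}_{\rm B}\mathbf{v}$ admits a complete set of $\mathbf{S}_{\rm B}$-orthogonal eigenvectors with real eigenvalues $\lambda_1 \le \cdots \le \lambda_n$, where $\lambda_1 = {\rm gSCR}$ by definition. Collecting these into a modal transformation $\mathbf{T}$ such that $\mathbf{T}^{-1}\mathbf{S}_{\rm B}^{-1}\mathbf{B}_{\rm r}\mathbf{T} = \operatorname{diag}(\lambda_1,\dots,\lambda_n)$, I would apply this change of coordinates to the full small-signal state-space model of the $n$-wind-farm system.

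The key step is the decoupling. Because all wind farms share identical (homogeneous) GFL dynamics, the closed-loop system matrix has a tensor-product structure: it is built from a common single-converter dynamic block interacting with the grid only through $\mathbf{S}_{\rm B}^{-1}\mathbf{B}_{\rm r}$. Applying $\mathbf{T}$ therefore block-diagonalizes the entire system into $n$ independent subsystems, the $i$-th of which is dynamically identical to a single-wind-farm-infinite-bus system whose effective short-circuit ratio equals the modal eigenvalue $\lambda_i$. I would make this precise by writing the linearized dynamics in Kronecker form, $\dot{\mathbf{x}} = (\mathbf{I}_n \otimes \mathbf{A}_0 + (\mathbf{S}_{\rm B}^{-1}\mathbf{B}_{\rm r}) \otimes \mathbf{A}_1)\mathbf{x}$, and noting that the similarity transform $\mathbf{T}^{-1}\otimes\mathbf{I}$ reduces this to a direct sum of blocks $\mathbf{A}_0 + \lambda_i \mathbf{A}_1$.

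With the decoupling in hand, stability of the whole system is equivalent to stability of every decoupled mode, i.e.\ every block $\mathbf{A}_0 + \lambda_i\mathbf{A}_1$ must be Hurwitz. Since each block is precisely the system matrix of a single converter tied to an infinite bus through an equivalent SCR of $\lambda_i$, I would invoke the single-machine critical-SCR characterization: the block is Hurwitz if and only if $\lambda_i > {\rm CgSCR}$, provided the single-converter system exhibits the standard monotone behavior where a larger SCR strengthens stability. The remaining task is to argue that this monotonicity holds, so that the binding constraint is the smallest eigenvalue; then stability of all modes reduces to $\lambda_1 = {\rm gSCR} > {\rm CgSCR}$, and the stability margin grows with $\lambda_1$, establishing the ``larger gSCR, larger margin'' claim.

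The main obstacle I anticipate is justifying the clean tensor-product form of the closed-loop model. This requires that the converter's interface to the network enter only through the susceptance-weighted coupling and that homogeneity make the per-node dynamic block truly identical across all $n$ converters; in a realistic model the coupling between active/reactive power and the voltage-angle states may not factor so neatly, so I would need to linearize carefully around the operating point and verify that the network interaction collapses to a single scalar multiplier $\lambda_i$ per mode. A secondary subtlety is the monotonicity of single-machine stability in the SCR, which I would either take as the defining property of ${\rm CgSCR}$ (a critically stable threshold with stability for larger SCR) or establish via a root-locus or eigenvalue-sensitivity argument on the scalar-parameterized block $\mathbf{A}_0 + \lambda\mathbf{A}_1$.
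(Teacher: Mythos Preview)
The paper does not actually prove this proposition; it simply states ``We refer the interested readers to~\cite{dong2018small} for the rigorous proof of Proposition~\ref{pro:gSCR}.'' Your modal-decomposition argument---diagonalizing $\mathbf{S}_{\rm B}^{-1}\mathbf{B}_{\rm r}$, exploiting the Kronecker structure induced by homogeneous converter dynamics, and reducing the multi-machine problem to $n$ decoupled single-machine-infinite-bus systems with effective SCR equal to each eigenvalue $\lambda_i$---is precisely the approach taken in the cited reference, so your proposal is correct and aligned with the intended proof. Your two flagged subtleties (the clean tensor factorization of the closed-loop model and the monotonicity of single-machine stability in SCR) are exactly the assumptions that reference makes explicit, so you have identified the right places where care is needed.
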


We refer the interested readers to~\cite{dong2018small} for the rigorous proof of Proposition~\ref{pro:gSCR}.
According to Definition~\ref{def:gSCR}, the calculation of gSCR requires the network parameters (topology and impedance of the lines) as well as the capacities of different converters. Hence, gSCR can be considered as a system-level metric that reflects the power grid strength in a multi-converter system, thereby extending the concept of SCR which can only be used in a single-converter-infinite-bus system.
The gSCR mathematically reflects the weighted connectivity of the power network (i.e., power grid strength). Moreover, it dramatically simplifies the small signal stability of large-scale power systems, as one can focus only on the network part instead of directly calculating the eigenvalues of a large-scale dynamical system. The intuition behind is that the power network should be strong enough, or equivalently, the sources/generators should be close enough to the ideal voltage sources (infinite buses) such that the GFL control can effectively follow the established frequency and voltage. From the power network perspective, one can increase the gSCR (and thus improve the stability) by connecting more voltage sources (infinite buses) to the network, especially to the nodes that are far away from the existing infinite buses.
{Combining the power grid strength quantified by gSCR in this section and the analysis of the voltage source behaviors of GFM converters in Section II, it is once again emphasized that it is necessary to install GFM converters to provide effective voltage source behaviors and thus enhance the power grid strength, which can be quantified by gSCR. On this basis, we will show in the next section that the integration of GFM converters has a similar effect to installing ideal voltage sources (i.e., infinite buses) in series with an equivalent internal impedance in the network. Further, we will derive the closed-form relationship between the gSCR and the capacity ratio between the GFM and the GFL converters to simplify the analysis of how large the capacity should be to meet certain stability margins.
}

\section{GFM Converters and Power Grid Strength}

Although GFM control has many superiorities over GFL control (e.g., voltage source behaviors, natural inertia emulations), it also has shortcomings. For instance, due to the current limitation of power converters, GFM converters have much more complicated transient behaviors than GFL converters under large disturbances~\cite{huang2017transient, du2018survivability, wu2018design}, and it may be challenging to ensure the transient stability of the grid when it has a great amount of GFM converters. Actually, one open question is: do we need to operate \textit{all the converters} in a power grid as GFM converters? In our opinion, operating some of them as GFM converters is enough to ensure the small signal stability of the system. We justify this thought below.

\begin{figure}[!t]
	\centering
	\includegraphics[width=3.3in]{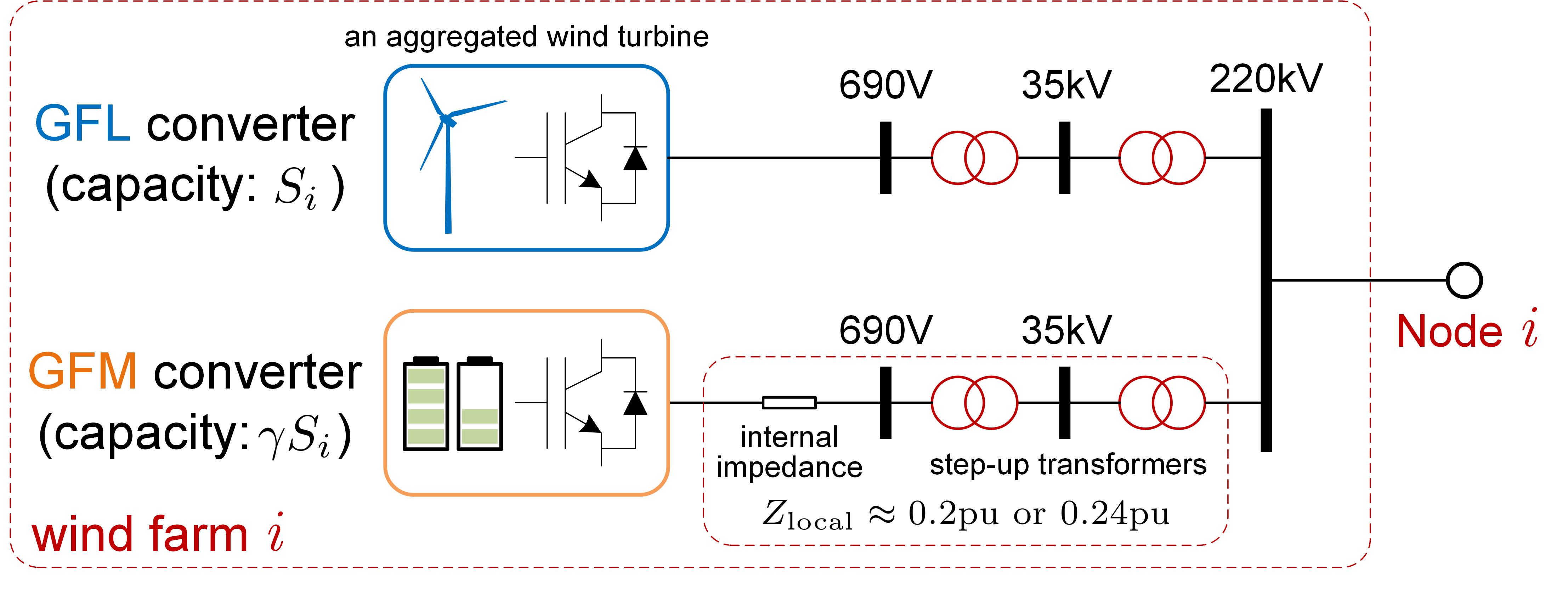}
	\vspace{-3mm}
	\caption{Each wind farm in Fig.~\ref{Fig_wind_farms} is equipped with a GFM converter. The GFM converter and the GFL converter are connected to one common 220 kV bus.}
	\vspace{-3mm}
	\label{Fig_wind_farm_GFM}
\end{figure}

In this section, we consider again the multi-wind-farm system depicted in Fig.~\ref{Fig_wind_farms}, where each wind farm inside such a multi-wind-farm system is equipped with an (aggregated) GFM converter and adopts the setting in Fig.~\ref{Fig_wind_farm_GFM}. Under such a multi-wind-farm system setting, one can analyze a realistic system with hundreds of wind farms across a large geographic area.
Since we focus on the power grid strength on the transmission level, the interaction among different wind turbines inside the wind farm is ignored. Hence, we use an aggregated wind turbine to represent its dynamics, which is connected to the high-voltage grid via two series step-up transformers. The GFM converter is often an energy storage system, which on the one hand, can be used to compensate the fluctuation of wind power, and on the other hand, enhances the power grid strength.
This setting has been widely accepted in industry and aligned with many on-going real-world GFM demonstration projects.
The GFM converter is also connected to the high-voltage grid via two series step-up transformers. Let $Y_{\rm local}$ ($Z_{\rm local}$) be the per-unit susceptance (reactance) between the internal voltage of the GFM converter and Node~$i$ (see Fig.~\ref{Fig_wind_farm_GFM}), which includes the converter's equivalent internal inductor (to approximate ${\bf Z}(s)$ at a certain frequency) and the equivalent inductor of transformers.  Here the per-unit calculation of $Y_{\rm local}$ ($Z_{\rm local}$) is based on the GFM converter's capacity so that $Y_{\rm local}$ ($Z_{\rm local}$) remains the same in different wind farms.

As detailed in Section~\ref{sec:II}, GFM converters have effective voltage source behaviors in terms of small signal dynamics, which justifies the assumption below.

\begin{assumption}\label{assump1}
In terms of small signal dynamics, a GFM converter can be approximated as an ideal voltage source (i.e., infinite bus) in series with an equivalent internal inductor at the point where AC voltage control is performed. {Note that the equivalent internal inductor under different GFM methods could be different, determined by the largest singular values of their impedance matrices at the frequency of interest.
}
\end{assumption}

{Fig.~\ref{Fig_impedance_model} shows that a GFM converter under different GFM methods can be accurately modeled as an ideal voltage source in series with a dynamical impedance ${\bf Z}(s)$ (e.g., ${\bf Z}_{\rm GFM}(s)$, ${\bf Z}_{\rm GFM-QD}(s)$, ${\bf Z}_{\rm GFM-VI}(s)$, ${\bf Z}_{\rm GFM-damp}(s)$, ${\bf Z}_{\rm droop}(s)$, and ${\bf Z}_{\rm PI}(s)$). Note that the largest singular value of the impedance matrix $\bar \sigma({\bf Z}(s)$) corresponds to the weakest direction of the voltage source behavior and is usually small no matter which GFM method is used, as shown in Fig.~\ref{Fig_SV_GFM_GFL}. Hence, we approximate ${\bf Z}(s)$ by an inductor to simplify the following analysis. The choice of the reactance can refer to Fig.~\ref{Fig_SV_GFM_GFL}. For instance, if the frequency of the dominant poles (induced by the PLL dynamics in weak grids) is 10 Hz, then we can choose the reactance to be $\bar \sigma({\bf Z}(s))$ at 10 Hz. 

Note that the reactance under different GFM methods could be different according to their largest singular values at the frequency of interest, which can be found in Fig.~\ref{Fig_SV_GFM_GFL}. A larger reactance means weaker voltage source behaviors. For example, when considering droop control, power synchronization control, and VSMs without additional control methods, their reactances are all around 0.01~pu since $\bar \sigma({\bf Z}_{\rm droop}(s)) \approx \bar \sigma({\bf Z}_{\rm GFM}(s)) \approx \bar \sigma({\bf Z}_{\rm PI}(s)) \approx -40~{\rm dB} = 0.01$ at 10 Hz in Fig.~\ref{Fig_SV_GFM_GFL}. When considering VSMs with reactive power droop control, virtual impedance, and damping enhancement, the reactance is 0.04~pu, 0.03~pu, and 0.02~pu, respectively, since $\bar \sigma({\bf Z}_{\rm GFM-QD}(s)) \approx -26~{\rm dB} \approx 0.04$, $\bar \sigma({\bf Z}_{\rm GFM-VI}(s)) \approx -30~{\rm dB} \approx 0.03$, and $\bar \sigma({\bf Z}_{\rm GFM-damp}(s)) \approx -32~{\rm dB} \approx 0.02$ at 10 Hz in Fig.~\ref{Fig_SV_GFM_GFL}. 
In this paper, to test the generality and effectiveness of the proposed approach when considering GFM converters under different implementations, we will consider power synchronization control and VSMs w/wo reactive power droop control in the analysis and simulation studies to quantify how they improve the small signal stability of the system, where VSMs without reactive power droop control belongs to the category of VSMs without additional control methods as mentioned above.}

{According to Assumption~\ref{assump1}, the integration of GFM converters should have a similar effect to connecting infinite buses through a small internal inductor to the power network, thereby increasing the power grid strength. It can also be seen from Fig.~\ref{Fig_SV_GFM_GFL} that the inductor under VSMs (without reactive power droop control) and power synchronization control is very small (0.01~pu) and can be ignored.}
This intuition was theoretically confirmed in~\cite{yang2020placing}. It was proved that changing the converters' control scheme from GFL to GFM is equivalent to increasing the power grid strength thanks to the voltage source behaviors. However, it remains unclear how many GFM converters we will need.
To this end, we present the connection between the capacity of GFM converters and the grid strength (i.e., gSCR).

\begin{proposition}[gSCR and capacity of GFM converters]\label{prop:GFM_capacity}
Consider the power network with multiple wind farms in Fig.~\ref{Fig_wind_farms}, where each wind farm (with GFL control) is equipped with a GFM converter, and the identical capacity ratio between the GFM converter and the wind farm is $\gamma$. If Assumption~\ref{assump1} holds, then the gSCR of the system is
\begin{equation}\label{eq:gSCR_GFM}
{\rm gSCR} = {\rm gSCR}_0 + \gamma Y_{\rm local}\,,
\end{equation}
where ${\rm gSCR}_0$ is the gSCR value without GFM converters.
\end{proposition}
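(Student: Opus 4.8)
The plan is to show that equipping each wind farm with a GFM converter amounts to adding a carefully structured diagonal perturbation to the Kron-reduced susceptance matrix, and that this perturbation collapses to a pure \emph{scalar} shift once we pass to the matrix whose smallest eigenvalue defines the gSCR. By Assumption~\ref{assump1}, the GFM converter at Node~$i$ acts as an infinite bus connected to Node~$i$ through the susceptance $Y_{\rm local}$. Since $Y_{\rm local}$ is expressed in the GFM converter's own per-unit base while the network matrix ${\bf B}$ uses the global base $S_{\rm global}$, I would first convert it. With the GFM capacity equal to $\gamma S_i$, the standard admittance base change (admittance base scales with $S_{\rm base}$) gives a global-base susceptance of $\gamma Y_{\rm local}\,S_i/S_{\rm global}$ from Node~$i$ to the newly added infinite bus.

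Next I would track how this modifies the susceptance matrix. Because each new infinite bus attaches only to a wind-farm node (Nodes $1 \sim n$), and an infinite bus contributes solely to the corresponding self-susceptance ${\bf B}_{ii} = \sum_j B_{ij}$, the sole effect is ${\bf B}_1 \mapsto {\bf B}_1 + {\bf D}$, where ${\bf D} \in \mathbb{R}^{n \times n}$ is diagonal with ${\bf D}_{ii} = \gamma Y_{\rm local}\,S_i/S_{\rm global}$; the off-diagonal coupling blocks ${\bf B}_2, {\bf B}_3$ and the interior block ${\bf B}_4$ are untouched. Consequently the Kron reduction commutes with the perturbation: since ${\bf B}_2{\bf B}_4^{-1}{\bf B}_3$ is unchanged, ${\bf B}_{\rm r} = {\bf B}_1 - {\bf B}_2{\bf B}_4^{-1}{\bf B}_3$ simply becomes ${\bf B}_{\rm r} + {\bf D}$.

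The final step---and the crux of the argument---is to evaluate the effect on the defining eigenvalue. Forming ${\bf S}_{\rm B}^{-1}({\bf B}_{\rm r} + {\bf D}) = {\bf S}_{\rm B}^{-1}{\bf B}_{\rm r} + {\bf S}_{\rm B}^{-1}{\bf D}$, I would observe that both ${\bf S}_{\rm B}$ and ${\bf D}$ are diagonal and carry the same factor $S_i/S_{\rm global}$, so that $({\bf S}_{\rm B}^{-1}{\bf D})_{ii} = (S_{\rm global}/S_i)(\gamma Y_{\rm local}\,S_i/S_{\rm global}) = \gamma Y_{\rm local}$; that is, ${\bf S}_{\rm B}^{-1}{\bf D} = \gamma Y_{\rm local}\,{\bf I}$ reduces to a scalar multiple of the identity. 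The entire matrix is therefore shifted by $\gamma Y_{\rm local}{\bf I}$, which rigidly translates every eigenvalue by $\gamma Y_{\rm local}$; in particular the smallest one moves from ${\rm gSCR}_0$ to ${\rm gSCR}_0 + \gamma Y_{\rm local}$, establishing~\eqref{eq:gSCR_GFM}. The main delicacy I expect is bookkeeping the per-unit base conversion correctly, since it is precisely the proportionality of each GFM converter's capacity to its host wind farm's capacity $S_i$ (through the common ratio $\gamma$) that makes ${\bf S}_{\rm B}^{-1}{\bf D}$ degenerate to a uniform scalar rather than a general diagonal matrix---had the ratio varied across nodes, the clean eigenvalue shift would fail.
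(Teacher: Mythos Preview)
Your proposal is correct and follows essentially the same route as the paper: convert $Y_{\rm local}$ to the global base to obtain the diagonal perturbation ${\bf D}=\gamma Y_{\rm local}{\bf S}_{\rm B}$ on ${\bf B}_{\rm r}$, then observe that ${\bf S}_{\rm B}^{-1}{\bf D}=\gamma Y_{\rm local}\,{\bf I}$ so every eigenvalue (and in particular the smallest) shifts by $\gamma Y_{\rm local}$. Your treatment is in fact slightly more careful than the paper's, since you explicitly verify that the new infinite-bus branches touch only the ${\bf B}_1$ block and hence the Kron reduction ${\bf B}_1-{\bf B}_2{\bf B}_4^{-1}{\bf B}_3$ is perturbed additively by ${\bf D}$, a point the paper asserts without comment.
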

\begin{proof}
The per-unit susceptance between the GFM converter and Node~$i$ becomes ${\bf S}_{{\rm B}ii} \gamma Y_{\rm local}$ when we use the global capacity $S_{\rm global}$ for per-unit calculations. With Assumption~\ref{assump1}, the integration of a GFM converter in the $i$-th wind farm is equivalent to adding a branch between Node~$i$ and an infinite bus with susceptance ${\bf S}_{{\rm B}ii} \gamma Y_{\rm local}$.
Then, we have
\begin{equation*}
\begin{split}
{\rm gSCR} = & \lambda_1[{\bf S}_{\rm B}^{-1} ({\bf B}_{\rm r} + \gamma Y_{\rm local} {\bf S}_{\rm B})] = \lambda_1({\bf S}_{\rm B}^{-1} {\bf B}_{\rm r} + \gamma Y_{\rm local} I) \\
= & \lambda_1({\bf S}_{\rm B}^{-1} {\bf B}_{\rm r}) + \gamma Y_{\rm local} = {\rm gSCR}_0 + \gamma Y_{\rm local} \,,
\end{split}
\end{equation*}
where $\lambda_1(\cdot)$ denotes the smallest eigenvalue of a matrix, and $I$ is the identity matrix.
This completes the proof.
\end{proof}

Proposition~\ref{prop:GFM_capacity} indicates that the installation of GFM converters in the wind farms increases the gSCR and thus the power grid strength. Moreover, the gSCR is a linear function of the capacity ratio $\gamma$, with the slope being $Y_{\rm local}$. {In practice, we consider two step-up transformers, and their equivalent total inductance is $0.2~{\rm pu}$, as shown in Fig.~\ref{Fig_wind_farm_GFM}. Hence, a typical value of $Z_{\rm local}$ (including the converter's equivalent internal inductance) is $0.2~{\rm pu}$ under VSMs (without reactive power droop control) and power synchronization control. A typical value of $Z_{\rm local}$ is $0.24~{\rm pu}$ under VSMs with reactive power droop control; see the corresponding internal inductance of the GFM converter in Fig.~\ref{Fig_SV_GFM_GFL}.}
With these typical values, we can then calculate the desired capacity ratio $\gamma$ using \eqref{eq:gSCR_GFM}. 
Once $\gamma$ is obtained, we can decide the \textit{number} of GFM converters to satisfy this capacity ratio, based on the typical capacity of a GFM converter provided by the manufacturers. 

{Note that when considering GFM converters under different implementations, Eq.~\eqref{eq:gSCR_GFM} is always applicable but $Y_{\rm local}$ could be different. The reason is that $Z_{\rm local}$ includes the GFM converter’s equivalent internal inductor, which could be different under different GFM methods according to Assumption 1 and Fig.~\ref{Fig_SV_GFM_GFL}. Accordingly, when $Y_{\rm local}$ is different, the desired capacity ratio $\gamma$ is different for a given gSCR and ${\rm gSCR}_0$. The intuition behind is that since different GFM control methods result in different voltage source behaviors, different ratios are needed to enhance the voltage source characteristics of the overall system. Furthermore, the smaller $Y_{\rm local}$ (larger equivalent internal inductor of GFM converters) is, the larger $\gamma$ will be, which can be seen from \eqref{eq:gSCR_GFM}. This is because a larger equivalent internal inductor means weaker voltage source behaviors of GFM converters, so more GFM converters are needed to enhance the voltage source characteristics of the overall system.}

\begin{example}\label{expl:1}
Consider the multi-wind-farm system depicted in Fig.~\ref{Fig_wind_farms} where each wind farm adopts the setting in Fig.~\ref{Fig_wind_farm_GFM}, in which the GFM converter and the GFL converter are connected to one common 220~kV bus.
In practice, a lot of wind farms are integrated in weak grids, where the wind turbines often have to face a low SCR, e.g., ${\rm SCR} = 1$ at its terminal (690~V bus) that can only guarantee the power transmission, or equivalently, ${\rm gSCR} = 1.25$ at the transmission level (220~kV). However, currently, many manufacturers design their (GFL) wind turbines to operate stably with SCR larger than 1.5 at its terminal (690~V bus), i.e., ${\rm gSCR} \ge 2.14$ at 220~kV. This gap (between 1.25 and 2.14) can be filled using GFM converters without further investment in enhancing the power network. {According to \eqref{eq:gSCR_GFM}, it requires the capacity ratio $\gamma \ge {\bf 17.8\%}$ if $Z_{\rm local} = 0.2~{\rm pu}$ when considering VSMs (without reactive power droop control) and power synchronization control. The capacity ratio becomes $\gamma \ge {\bf 21.4\%}$ if $Z_{\rm local} = 0.24~{\rm pu}$ when considering VSMs with reactive power droop control. It can be seen that the typical smallest ratio under VSMs with reactive power droop control is larger than that under VSMs (without reactive power droop control) and power synchronization control. The reason is that the equivalent internal inductor of VSMs with reactive power droop control ($0.04~{\rm pu}$) is larger than that of VSMs (without reactive power droop control) and power synchronization control ($0.01~{\rm pu}$).}
\end{example}

{In practice, it is also possible that the wind farm (GFL converter) and the GFM converter are connected to one common 35~kV bus, as shown in Fig.~\ref{Fig_wind_farm_GFM_1}. The equivalent inductor of the transformer is $0.08~{\rm pu}$. Hence, the typical value of $Z_{\rm local}$ is $0.08~{\rm pu}$ under VSMs (without reactive power droop control) and power synchronization control, but the typical value of $Z_{\rm local}$ becomes $0.12~{\rm pu}$ under VSMs with reactive power droop control.
}
In this case, the electrical distance between the GFL converter and the GFM converter becomes smaller, and one may need fewer GFM converters to enhance the equivalent power grid strength, as illustrated in the next example.

\begin{figure}[!t]
	\centering
	\includegraphics[width=3.3in]{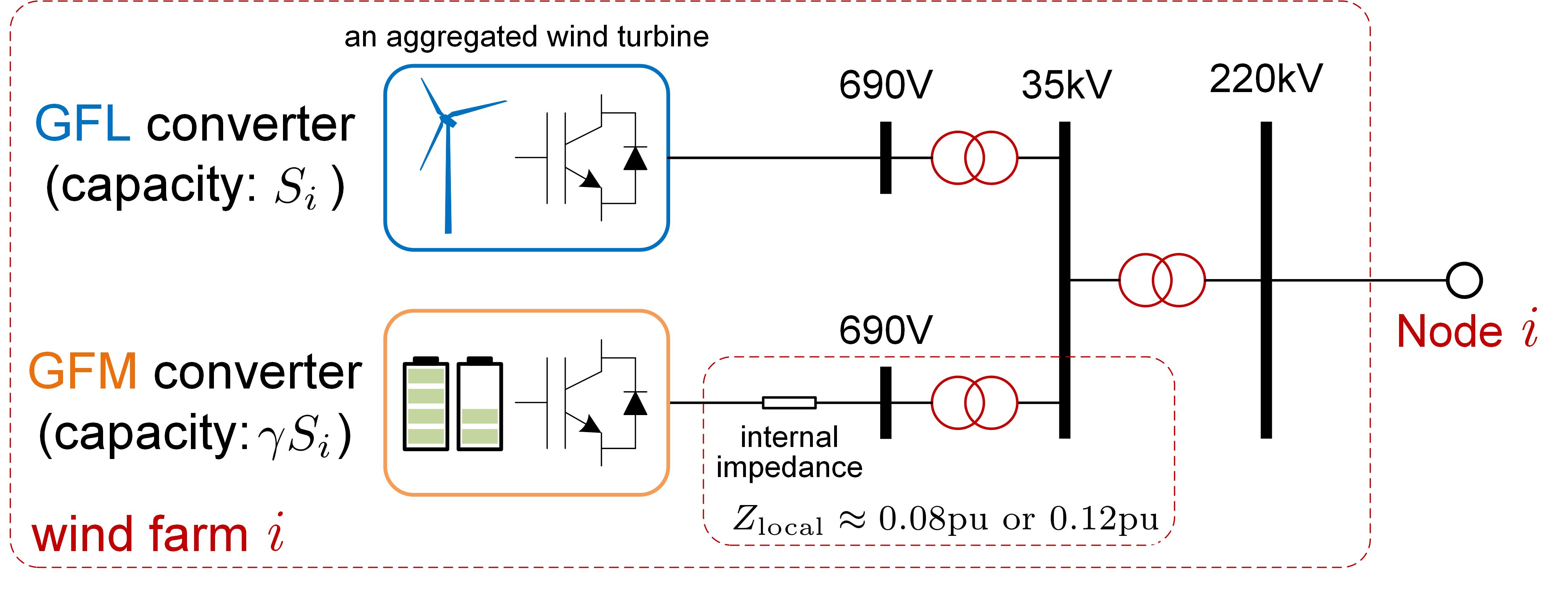}
	\vspace{-2mm}
	\caption{The GFM converter and the GFL converter are connected to one common 35 kV bus in each wind farm of Fig.~\ref{Fig_wind_farms}.}
	\vspace{-1.5mm}
	\label{Fig_wind_farm_GFM_1}
\end{figure}

\begin{example}\label{expl:2}

Consider the multi-wind-farm system depicted in Fig.~\ref{Fig_wind_farms} where each wind farm adopts the setting in Fig.~\ref{Fig_wind_farm_GFM_1}, in which the GFM converter and the GFL converter are connected to one common 35~kV bus. Under this setting, when the wind turbines face ${\rm SCR} = 1$ at its terminal (690~V bus), we have equivalently ${\rm gSCR} = 1.1$ at the 35~kV bus. However, currently many manufacturers design their (GFL) wind turbines to operate stably with SCR larger than 1.5 at its terminal (690~V bus), i.e., ${\rm gSCR} \ge 1.7$ at 35~kV. This gap (between 1.1 and 1.7) can be filled using GFM converters without further investment in enhancing the power network.
{According to \eqref{eq:gSCR_GFM}, it requires the capacity ratio $\gamma \ge {\bf 4.8\%}$ if $Z_{\rm local} = 0.08~{\rm pu}$ when considering VSMs (without reactive power droop control) and power synchronization control.  The capacity ratio becomes $\gamma \ge {\bf 7.4\%}$ if $Z_{\rm local} = 0.12~{\rm pu}$ when considering VSMs with reactive power droop control. The typical smallest ratio under VSMs with reactive power droop control is still larger than that under VSMs (without reactive power droop control) and power synchronization control.
}

\end{example}

As an alternative, one can also change some of the wind turbines in the wind farm from GFL mode to GFM mode, without installing a new energy storage system. This can be done by, for instance, using the control method in~\cite{he2018coordinated}. Notice that in practice, usually each wind turbine is equipped with a step-up transformer to connect to the 35~kV bus. Hence, under the above setting, the wind farm can be represented as one aggregated GFL converter and one aggregated GFM converter, similar to the setting depicted in Fig.~\ref{Fig_wind_farm_GFM_1}. The only difference is that the capacity of the GFL converter becomes $(1 - \gamma)S_i$ rather than $S_i$ since some of the wind turbines operate in GFM mode. Based on~\eqref{eq:gSCR_GFM}, we can derive that the new gSCR becomes
\begin{equation}\label{eq:gSCR_GFM_1}
{\rm gSCR} = \frac{{\rm gSCR}_0}{1-\gamma} + \frac{\gamma Y_{\rm local}}{1-\gamma}\,.
\end{equation}

\begin{example}\label{expl:3}

Consider the multi-wind-farm system depicted in Fig.~\ref{Fig_wind_farms} where each wind farm adopts the topology depicted in Fig.~\ref{Fig_wind_farm_GFM_1}. The GFM converter (capacity: $\gamma S_i$) and the GFL converter (capacity: $(1-\gamma) S_i$) are connected to one common 35~kV bus. Similar to Example~\ref{expl:2}, we aim to raise the gSCR at 35~kV from 1.1 to at least 1.7, which can be done by changing some wind turbines from GFL mode to GFM mode. {
According to \eqref{eq:gSCR_GFM_1}, it requires the capacity ratio $\gamma \ge {\bf 4.2\%}$ if $Z_{\rm local} = 0.08~{\rm pu}$ when considering VSMs (without reactive power droop control) and power synchronization control, while the capacity ratio becomes $\gamma \ge {\bf 6.1\%}$ if $Z_{\rm local} = 0.12~{\rm pu}$ when considering VSMs with reactive power droop control.}

\end{example}

\section{Simulation Results}

\begin{figure}[!t]
	\centering
	\includegraphics[width=3.4in]{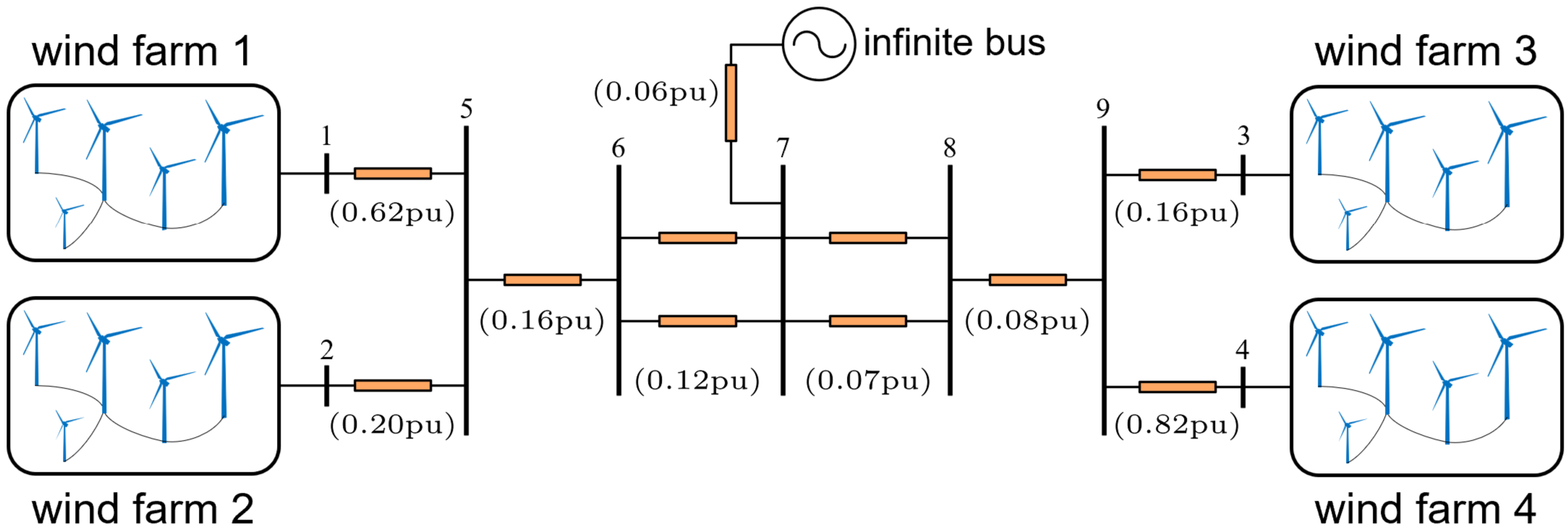}
	\vspace{-2mm}
	\caption{An inductive power grid that is integrated with four wind farms, with the capacity matrix being ${\bf S}_{\rm B} = {\rm diag}(0.5, 1.0, 1.5, 0.5)$.}
	\vspace{1mm}
	\label{Fig_4_wind_farms}
\end{figure}

\begin{figure}[!t]
	\centering
	\includegraphics[width=3.4in]{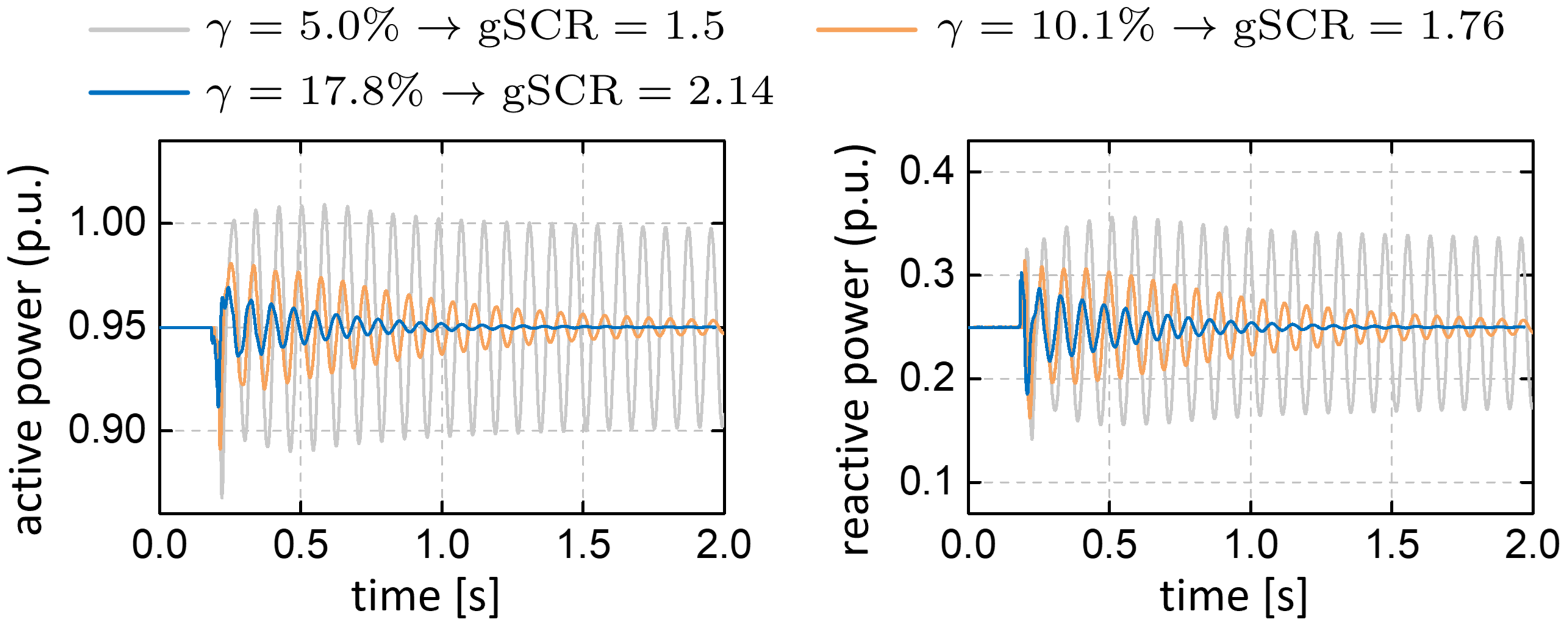}
	\vspace{-3mm}
	\caption{Time-domain responses of wind farm 1 with different capacity ratios (the other wind farms have similar damping ratios) when considering VSMs without reactive power droop control. The GFM converter and the GFL converter are connected to one common 220~kV bus, corresponding to Example~\ref{expl:1}. The gSCR in the legends is calculated at the 220~kV bus.}
	\vspace{0mm}
	\label{Fig_time_domain}
\end{figure}

\begin{figure}[!t]
	\centering
	\includegraphics[width=3.4in]{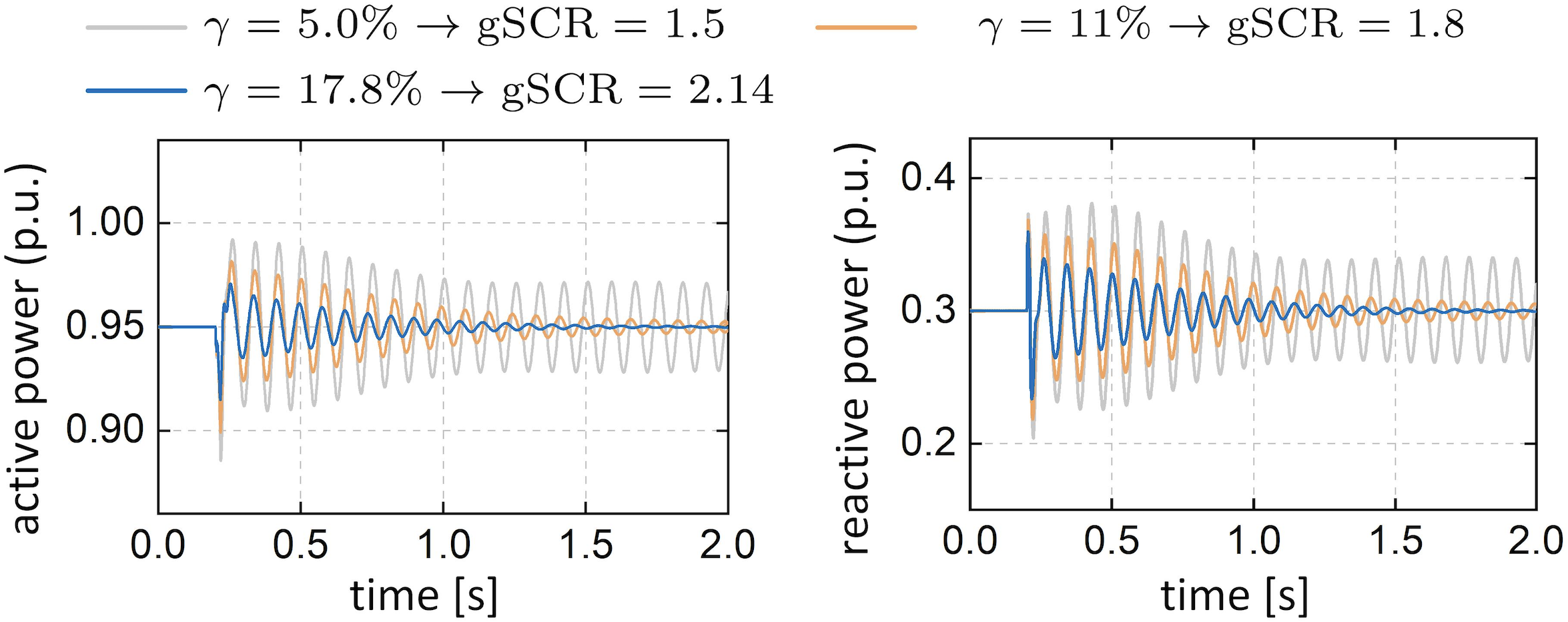}
	\vspace{-3mm}
	\caption{{Time-domain responses of wind farm 1 with different capacity ratios (the other wind farms have similar damping ratios) when considering GFM converters under power synchronization control. The GFM converter and the GFL converter are connected to one common 220~kV bus, corresponding to Example~\ref{expl:1}. The gSCR in the legends is calculated at the 220~kV bus.}}
	\vspace{0mm}
	\label{Fig_time_domain2}
\end{figure}

\begin{figure}[!t]
	\centering
	\includegraphics[width=3.4in]{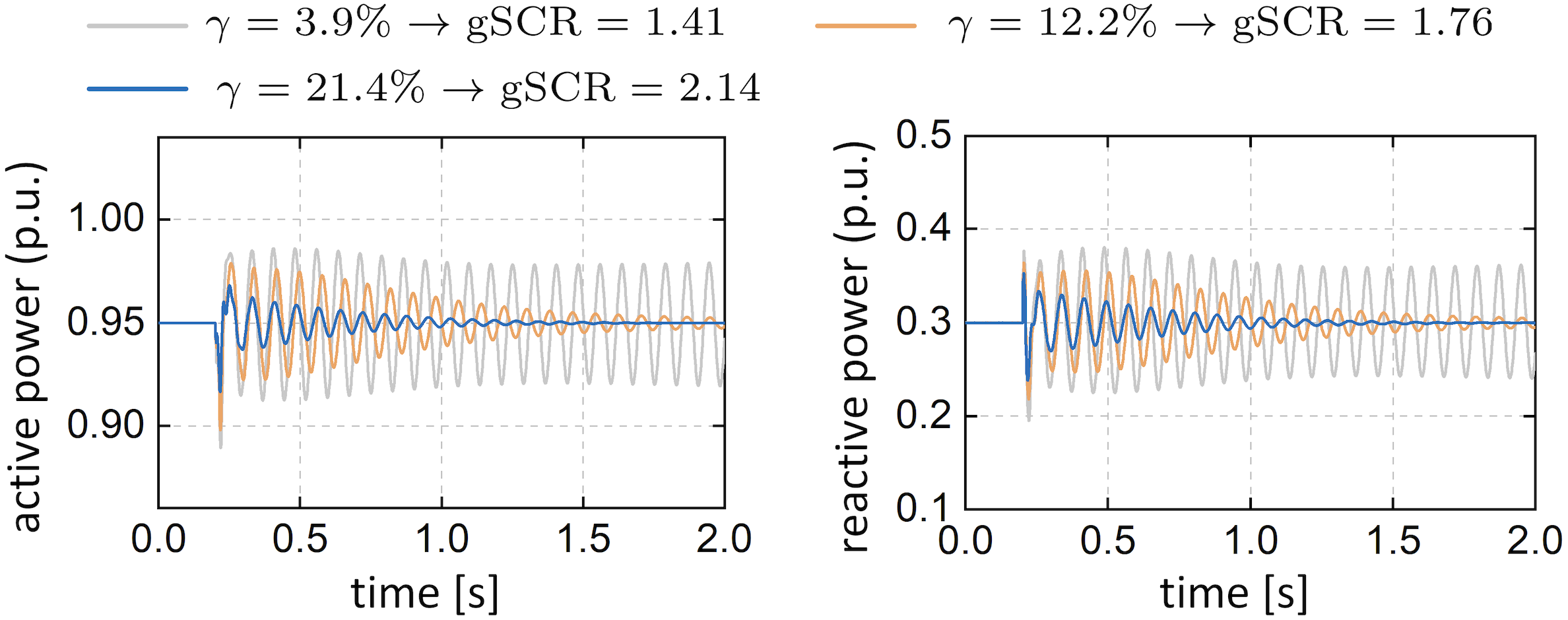}
	\vspace{-3mm}
	\caption{{Time-domain responses of wind farm 1 with different capacity ratios (the other wind farms have similar damping ratios) when considering VSMs with reactive power droop control. The GFM converter and the GFL converter are connected to one common 220~kV bus, corresponding to Example~\ref{expl:1}. The gSCR in the legends is calculated at the 220~kV bus.}}
	\vspace{0mm}
	\label{Fig_time_domaindc}
\end{figure}

\begin{figure}[!t]
	\centering
	\includegraphics[width=3.4in]{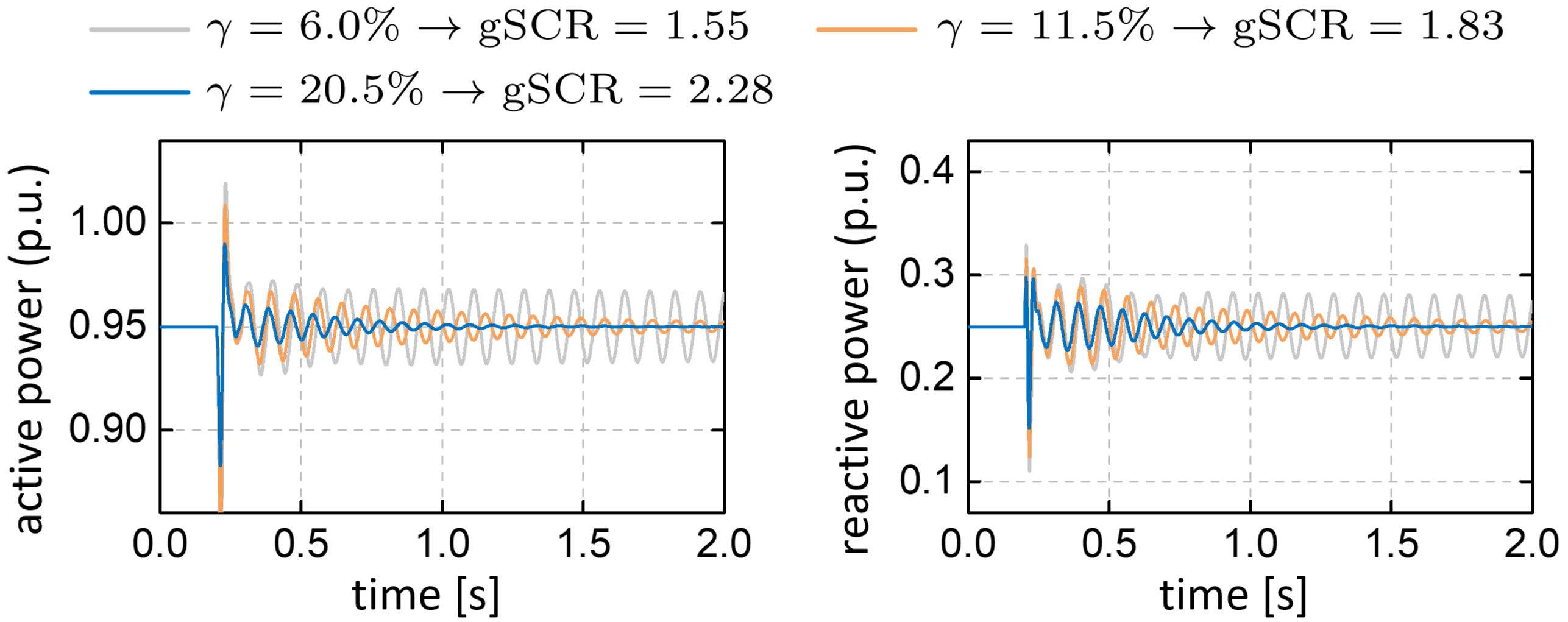}
	\vspace{-3mm}
	\caption{Time-domain responses of wind farm 1 when replacing the infinite bus in Fig.~\ref{Fig_4_wind_farms} with a synchronous generator whose capacity equals the total capacity of the wind farms (a load is connected at its terminal to consume the active power of the system). The other parameters are the same as in Fig.~\ref{Fig_time_domain}.}
	\vspace{0mm}
	\label{Fig_time_domain_SG}
\end{figure}

Consider four wind farms that are integrated into a two-area power network (with a similar setting as in~\cite{yang2020placing}), as shown in Fig.~\ref{Fig_4_wind_farms}, where each wind farm adopts the setting in Fig.~\ref{Fig_wind_farm_GFM}, i.e., the GFM converter and the GFL converter are connected to one common 220~kV bus. We consider direct-drive wind turbines that rely on GFL converters for grid connection.
We consider the scenario where the system is \textit{unstable} with ${\rm gSCR} = {\rm gSCR}_0 = 1.25$ (i.e., $\gamma = 0$) at the 220~kV bus. 

{Rather than changing the power network, we use GFM converters under power synchronization control or VSMs (w/wo reactive power droop control), respectively, to improve the power grid strength and stabilize the system according to Proposition~\ref{prop:GFM_capacity}. Fig.~\ref{Fig_time_domain},  Fig.~\ref{Fig_time_domain2}, and Fig.~\ref{Fig_time_domaindc} show the responses of the system with different capacity ratios $\gamma$ under different GFM methods, respectively. There is a voltage disturbance from the infinite bus at t = 0.2 s (a voltage sag of 5\% that lasts 10 ms). It can be seen that the damping ratio of the system is improved when a larger $\gamma$ is adopted (i.e., with more GFM converters), and the system has satisfactory performance with $\gamma = 17.8\%$ or $\gamma = 21.4\%$ (aligned with Example 1). Furthermore, it can be confirmed that the $\gamma$ under VSMs with reactive power droop control needs to be larger to achieve similar damping performance, compared with GFM converters under power synchronization control and VSMs without reactive power droop control.
}

{Further, we replace the infinite bus in Fig.~\ref{Fig_4_wind_farms} with a synchronous generator and consider a short-circuit event (with a grounded 0.1~pu resistor at Bus~7) that lasts for 10~ms. Fig.~\ref{Fig_time_domain_SG} shows the responses of wind farm 1 with different percentages of GFM converters.
The other settings are the same as in Fig.~\ref{Fig_time_domain}. By comparing Fig.~\ref{Fig_time_domain_SG} with Fig.~\ref{Fig_time_domain}, we can see that the results of the requirements of GFM converters are very similar, which justifies the effectiveness of our analysis when detailed dynamics of synchronous generators are considered. Moreover, it can be seen that slightly more GFM converters are needed ($\gamma$ is higher) to achieve similar damping performance after replacing the infinite bus with an equivalent synchronous generator. Note that this paper particularly focuses on the small signal stability problem caused by the GFL converters and weak grids. Hence, we consider the voltage source behaviors of GFM converters and synchronous generators as an effective means to improve the power grid strength and thus (small signal) stability. The interaction between GFM converters, synchronous generators, and GFL converters can be considered in future work.}

\begin{figure}[!t]
	\centering
	\includegraphics[width=3.4in]{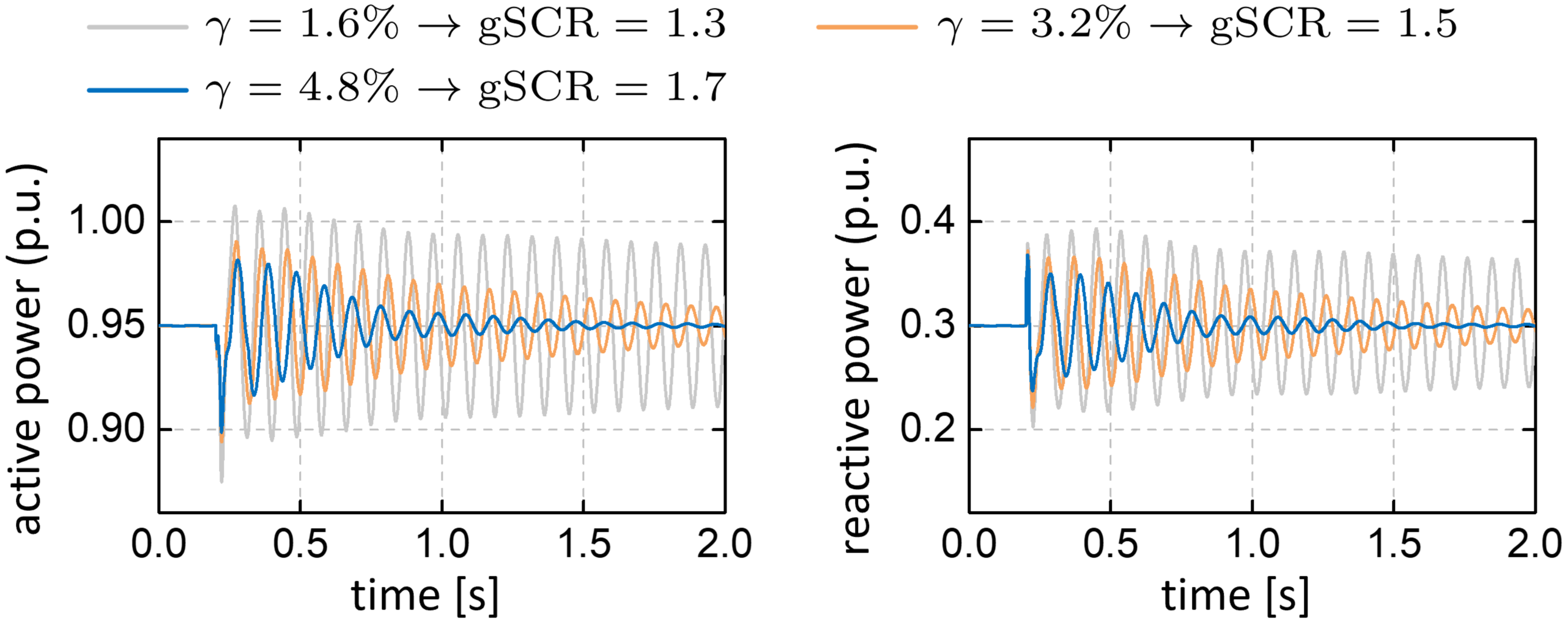}
	\vspace{-3mm}
	\caption{Time-domain responses of wind farm 1 with different capacity ratios (the other wind farms have similar damping ratios) when considering VSMs without reactive power droop control. The GFM converter and the GFL converter are connected to one common 35~kV bus, corresponding to Example~\ref{expl:2}. The gSCR in the legends is calculated at the 35~kV bus.}
	\vspace{0mm}
	\label{Fig_time_domain_2}
\end{figure}

\begin{figure}[!t]
	\centering
	\includegraphics[width=3.4in]{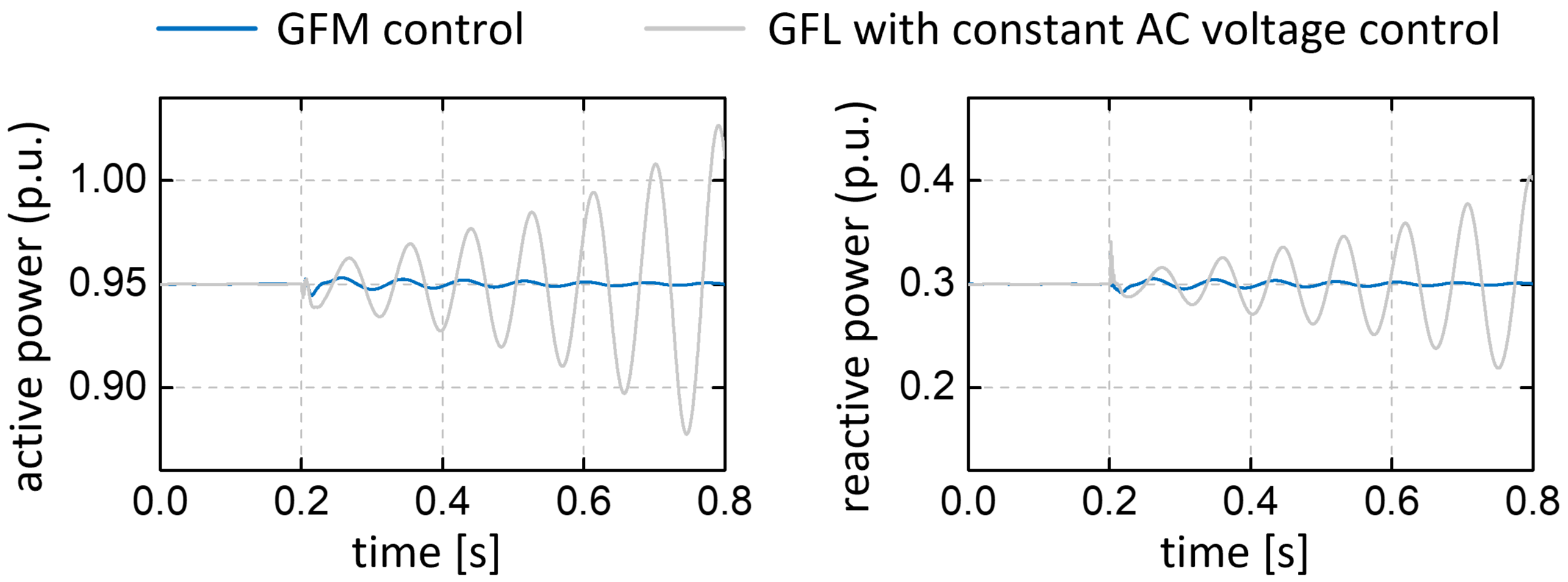}
	\vspace{-3mm}
	\caption{Comparison of installing GFM converters and GFL converters with constant AC voltage control.}
	\vspace{-3mm}
	\label{Fig_time_domain_3}
\end{figure}

{We next consider the setting of wind farms in Fig.~\ref{Fig_wind_farm_GFM_1}, i.e., VSMs without reactive power droop control and the GFL converter are connected to one common 35~kV bus.}
We consider the scenario where the system is \textit{unstable} with ${\rm gSCR} = {\rm gSCR}_0 = 1.1$ (i.e., $\gamma = 0$) at the 35~kV bus. The other settings for the power grid in Fig.~\ref{Fig_4_wind_farms} are the same as those described above. Fig.~\ref{Fig_time_domain_2} shows the responses of the system with different capacity ratios $\gamma$. It can be seen that the damping ratio of the system is improved when a larger $\gamma$ is adopted (i.e., with more GFM converters), and the system has satisfactory performance with $\gamma = 4.8\%$ (aligned with Example~\ref{expl:2}). To validate our analysis in Section~\ref{sec:II}, Fig.~\ref{Fig_time_domain_3} displays the responses of the system (active and reactive power of wind farm~1) under a voltage disturbance (a voltage sag of 5\% at the infinite bus that lasts 1~ms), in which we change VSMs without reactive power droop control to GFL converters with constant AC voltage control ($\gamma = 4.8\%$). It can be seen that the system became unstable if, instead of installing VSMs without reactive power droop control, one chooses to install GFL converters with constant AC voltage control. The reason behind is that even with constant AC voltage control, GFL converters can only exhibit 1D-VS behaviors due to its control structure and thus cannot enhance the power grid strength, as discussed in Section~\ref{sec:II}. By comparison, VSMs without reactive power droop control have 2D-VS behaviors and can effectively enhance the power grid strength.

\begin{figure}[!t]
	\centering
	\includegraphics[width=3.4in]{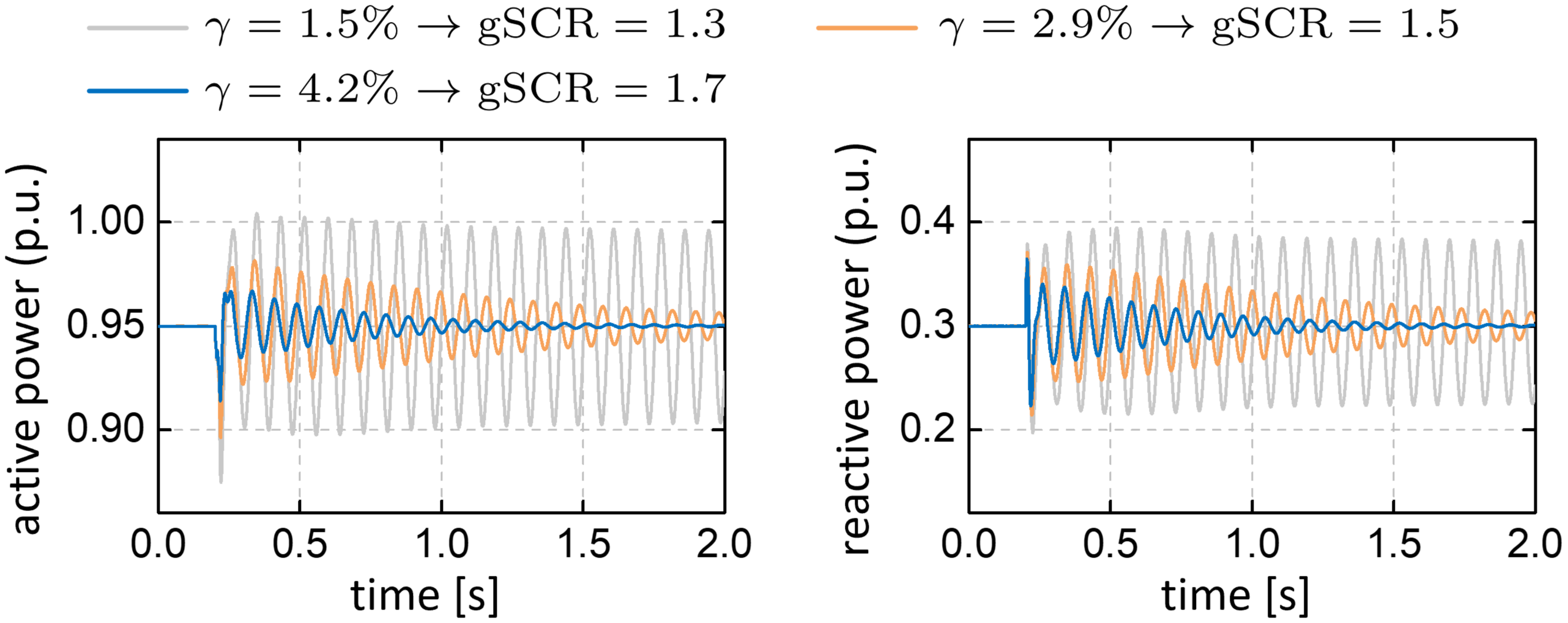}
	\vspace{-3mm}
	\caption{Time-domain responses of wind farm 1 with different capacity ratios (the other wind farms have similar damping ratios) when changing some of the wind turbines from GFL mode to GFM mode. The GFM converter and the GFL converter are connected to one common 35~kV bus, corresponding to Example~\ref{expl:3}. The gSCR in the legends is calculated at the 35~kV bus.}
	\vspace{0mm}
	\label{Fig_time_domain_4}
\end{figure}

As discussed in the previous section, it is also possible to change some of the wind turbines from GFL mode to GFM mode without installing new energy storage systems. {Fig.~\ref{Fig_time_domain_4} shows the simulation results when different $\gamma$ is applied, where $\gamma$ is the capacity ratio between the aggregated VSM (without reactive power droop control) and the whole wind farm.}
It can be seen that the damping ratio is increased with a larger $\gamma$, and the system has satisfactory performance with $\gamma = 4.2\%$ (aligned with Example~\ref{expl:3}).
In short, the above simulation results are fully consistent with our analysis in the previous sections. We note that one can further increase $\gamma$ to obtain a better damping ratio and performance. Of course, this requires more effort to increase the proportion of GFM converters in the system. Fig.~\ref{Fig_percentage} shows the capacity ratios ($\gamma$) required to satisfy different values of gSCR (at the terminal 690~V bus) under the above settings {(considering VSMs without reactive power droop control)}. Again, it confirms that we only need to operate some of the converters in GFM mode to enhance the equivalent power grid strength; when the GFL and GFM converters are both connected to medium voltage buses (e.g., 35~kV), we usually need less than 20\% of the converters in GFM mode to ensure a desired stability margin.

\begin{figure}[!t]
	\centering
	\includegraphics[width=3.43in]{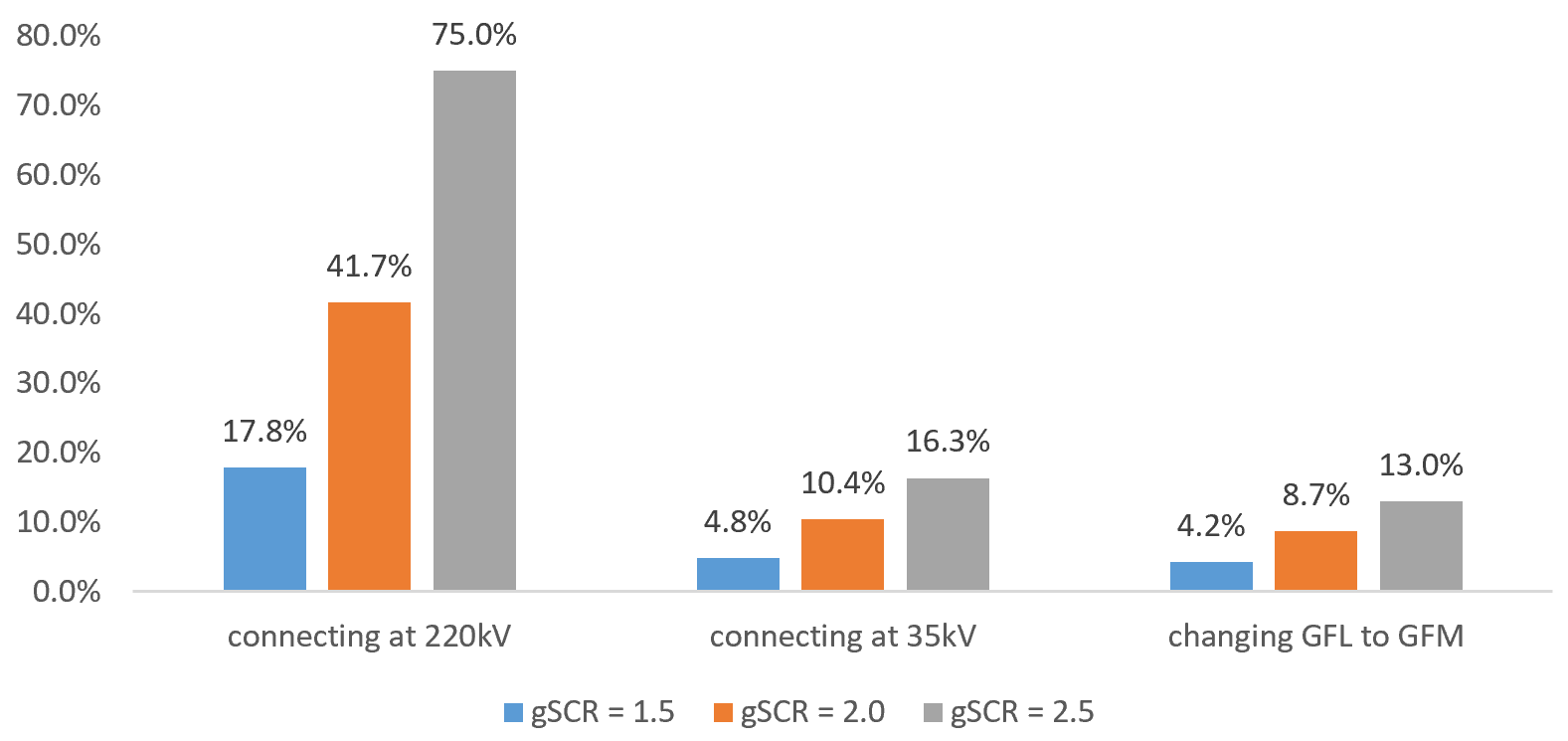}
	\vspace{-3mm}
	\caption{Capacity ratios ($\gamma$) required to achieve different prescribed values of equivalent gSCR at the terminal 690~V bus under different scenarios.}
	\vspace{0mm}
	\label{Fig_percentage}
\end{figure}

\section{Conclusions}

This paper focused on how to determine the capacity/number of GFM converters in a power grid. As a first step, we study the voltage source behaviors of GFM and GFL converters, and demonstrate that only GFM converters can provide effective (i.e., 2D-VS) voltage source behaviors. Our analysis confirms that it is necessary to install GFM converters to provide effective (small signal) voltage support and enhance the power grid strength, rather than modifying GFL converters using constant AC voltage magnitude control.
Then, we explicitly derived the relationship between the capacity ratio of GFM converters and the power grid strength (characterized by the so-called gSCR), and proved that the installation of GFM converters in a multi-converter system improves the power grid strength and thus the overall small signal stability. Our analysis suggests that in terms of improving small signal stability, it is not necessary to operate all the converters in a power grid in GFM mode. {For instance, our Example~\ref{expl:1} and simulation results showed that a capacity ratio around $\bf 17.8\%$ or $\bf 21.4\%$ can already increase the stability margin significantly (depending on the employed GFM methods).}
Future work can include how to configure GFM converters in the power grid considering frequency stability, transient stability, and small signal stability simultaneously. Moreover, it will be interesting to investigate whether it is a better option to operate the whole wind farm in grid-forming mode instead of implementing the GFM control in the converters of energy storage systems.

\appendices
    \section{Different GFM control diagrams}

    See Fig.~\ref{fig:appendix}.
    
    \begin{figure}[!t]
    \centering
	\includegraphics[width=3.4in]{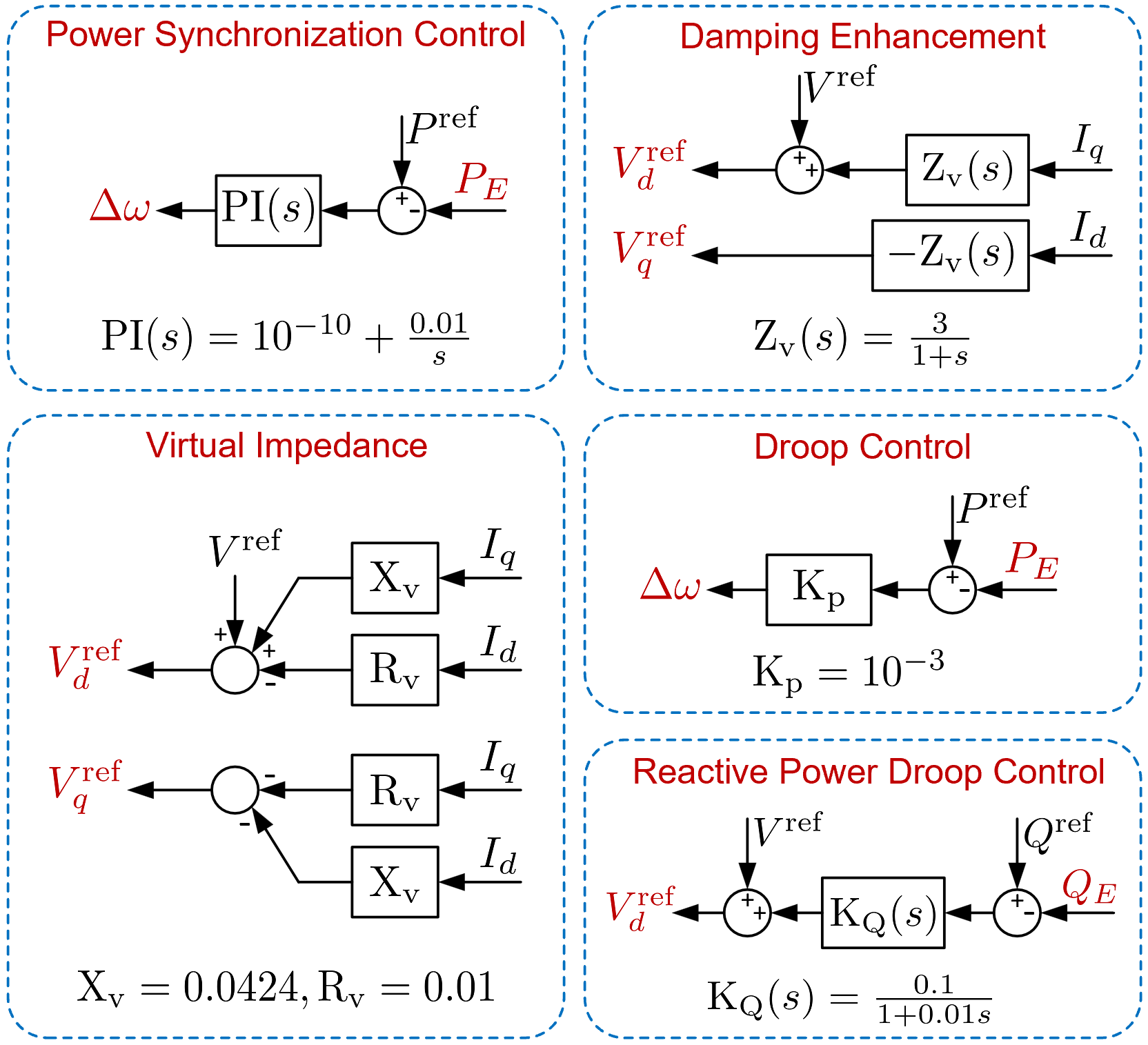}
    \vspace{-1.5mm}
    \caption{{The control diagrams and parameters of different GFM methods: droop control, power synchronization control, reactive power droop control~\cite{6200347}, virtual impedance~\cite{6022775}, and damping enhancement~\cite{9271874,8767907}, where the rest of the control loops are the same as those in VSMs (see Fig.~\ref{Fig_control_diagram_GFM_GFL}).}} 
    \label{fig:appendix} 
\end{figure}


%
\section{Admittance Model of GFL Converters}
\label{Appx:B}

\renewcommand{\theequation}{\thesection.\arabic{equation}}

\setcounter{equation}{0}

In what follows, we derive the admittance matrix of the GFL (PLL-based) converter in Fig.~\ref{Fig_control_diagram_GFM_GFL}.
When modeled in the controller's rotating \emph{dq}-frame (whose angular frequency is determined by the PLL), the filters' equations are \cite{harnefors2007modeling}
\begin{equation}
{\vec U^*} - \vec V = \left( {s{L_F} + j\omega {L_F}} \right) \vec I_C\,,		\label{eq:Lf}
\end{equation}
\begin{equation}
\vec I_C - {\vec I} = \left( {s{C_F} + j\omega {C_F}} \right) \vec V  =: {\vec Y_{\rm CL}}(s) \vec V\,, 	\label{eq:Cf}
\end{equation}
where ${\vec U^*} = U_d^* + jU_q^*$, $\vec V = {V_d} + j{V_q}$, $\vec I_C = {I_{Cd}} + j{I_{Cq}}$ and ${\vec I} = {I_{d}} + j{I_{q}}$ are the corresponding vectors of ${\bf{U}}_{{\bf{abc}}}^{\bf{*}}$, ${{\bf{V}}_{{\bf{abc}}}}$, ${{\bf{I}}_{{\bf{Cabc}}}}$, and ${{\bf{I}}_{{\bf{abc}}}}$ in the controller's \emph{dq}-frame, respectively, ${L_F}$ is the converter-side inductance, and ${C_F}$ is the \emph{LCL} capacitance. 

The control law of the current control loop is
\begin{equation}
{\vec U^*} = {\rm PI}_{\rm CC}(s) \left( {{{\vec I}^{ref}} - \vec I_C} \right ) + j\omega {L_F}\vec I_C + {f_{\rm VF}}(s)\vec V\,,	\label{eq:current_control}
\end{equation}
where ${\rm PI}_{\rm CC}(s) = {K_{\rm CCP}} + {{{K_{\rm CCI}}} \mathord{\left/ {\vphantom {{{K_{\rm CCI}}} s}} \right. \kern-\nulldelimiterspace} s}$ is the transfer function of the PI regulator, ${f_{\rm VF}}(s) = {{{K_{\rm VF}}} \mathord{\left/{\vphantom {{{K_{\rm VF}}} {\left( {{T_{\rm VF}}s + 1} \right)}}} \right.\kern-\nulldelimiterspace} {\left( {{T_{\rm VF}}s + 1} \right)}}$ is a first-order filter that mitigates the high-frequency components of the voltage feed-forward signals, and ${\vec I^{\rm ref}} = I_d^{\rm ref} + jI_q^{\rm ref}$ is the current reference vector that comes from the outer loops.

By combining (\ref{eq:Lf}) and (\ref{eq:current_control}) we obtain
\begin{equation}
{G_I}(s) {\vec I^{\rm ref}} - {Y_{\rm VF}}(s) \vec V = \vec I_C\,,		\label{eq:current_track}
\end{equation}
where
\begin{equation}
{G_I}(s) = \frac{{{\rm PI}_{\rm CC}(s)}}{{s{L_F} + {\rm PI}_{\rm CC}(s)}},\;{Y_{\rm VF}}(s) = \frac{{1 - {f_{\rm VF}}(s)}}{{s{L_F} + {\rm PI}_{\rm CC}(s)}}.	\label{eq:GI}
\end{equation}

We note that the above equations are obtained based on space vectors and complex transfer functions, and they can be conveniently transformed to matrix form considering the following equivalent transformation \cite{harnefors2007modeling}
\begin{equation}
\begin{split}
{y_d} + j{y_q} &= \left[ {{G_d}(s) + j{G_q}(s)} \right] \left( {{x_d} + j{x_q}} \right)\\
\Leftrightarrow \begin{bmatrix}
	{{y_d}}\\
	{{y_q}}
	\end{bmatrix} &= \begin{bmatrix}
	{{G_d}(s)}&{ - {G_q}(s)}\\
	{{G_q}(s)}&{{G_d}(s)}
	\end{bmatrix} \begin{bmatrix}
	{{x_d}}\\
	{{x_q}}
	\end{bmatrix}.	\label{eq:transf}
\end{split}
\end{equation}

The small signal dynamics of the SRF-PLL that determines the dynamics of the controller's rotating \emph{dq}-frame is
\begin{equation}
\Delta \theta  = \frac{\Delta \omega }{s} = \frac{{\rm PI}_{\rm PLL}(s) }{s} {\Delta V_q} \,,
\label{eq:PLL}
\end{equation}
where $\theta$ (rad) is the phase of the controller's rotating \emph{dq}-frame, $\omega$ (rad/s) is the angular frequency, ${\rm PI}_{\rm PLL}(s) = K_{\rm PLLP} + K_{\rm PLLI} / s$ is the transfer function of the PI regulator in PLL.

The converter applies active power control and constant AC voltage control as
\begin{equation}
\begin{split}
I_d^{\rm ref} &= {\rm PI}_{\rm PC}(s) \left( {{P^{\rm ref}} - {P_E}} \right)\,,\\
I_q^{\rm ref} &= {\rm PI}_{\rm VC}(s) \left( {{V_d} - {V^{\rm ref}}} \right)\,,		\label{eq:power_control}
\end{split}
\end{equation}
where ${\rm PI}_{\rm PC}(s) = {K_{\rm PCP}} + {{{K_{\rm PCI}}} \mathord{\left/{\vphantom {{{K_{\rm PCI}}} s}} \right.\kern-\nulldelimiterspace} s}$ and ${\rm PI}_{\rm VC}(s) = {K_{\rm VCP}} + {{{K_{\rm VCI}}} /s}$ are the transfer functions of the PI regulators, $P^{\rm ref}$ and $V^{\rm ref}$ are the reference values, $P_E$ is the active power output of the converter, which can be calculated by
\begin{equation}\label{eq:PQ}
{P_E} = {V_d}{I_{Cd}} + {V_q}{I_{Cq}} \,.
\end{equation}
Linearize (\ref{eq:PQ}) around the equilibrium $(I_{Cd0},I_{Cq0},V_{d0},V_{q0})$ where $V_{q0} = 0$ and then combine it with (\ref{eq:current_track}) and (\ref{eq:power_control}) to obtain the converter-side equivalent admittance
\begin{equation}
- \begin{bmatrix}
	{\Delta {I_{Cd}}}\\
	{\Delta {I_{Cq}}}
	\end{bmatrix} = \begin{bmatrix}
	{{Y_{11}}(s)}&{{Y_{12}}(s)}\\
	{{Y_{21}}(s)}&{{Y_{22}}(s)}
	\end{bmatrix} \begin{bmatrix}
	{\Delta {V_d}}\\
	{\Delta {V_q}}
	\end{bmatrix}\,,		\label{eq:I_V_matrix}
\end{equation}
where
\begin{equation}
\begin{split}
{Y_{11}}(s) &= \frac{{{G_I}(s){\rm PI}_{\rm PC}(s){I_{Cd0}} + {Y_{\rm VF}}(s)}}{{{\rm{1 + }}{G_I}(s){\rm PI}_{\rm PC}(s){V_{d0}}}}\,,\\
{Y_{12}}(s) &= \frac{{{G_I}(s){\rm PI}_{\rm PC}(s){I_{Cq0}}}}{{{\rm{1 + }}{G_I}(s){\rm PI}_{\rm PC}(s){V_{d0}}}}\,,\\
{Y_{21}}(s) &= -G_I(s) {\rm PI}_{\rm VC}(s) \,,\\
{Y_{22}}(s) &= Y_{\rm VF}(s).
\end{split}
\end{equation}

The equivalent admittance in (\ref{eq:I_V_matrix}) represent the converter-side dynamics in the controller's \emph{dq}-frame. We next transform this local admittance to a global coordinate whose angular frequency is a constant (i.e., $\omega_0 = 100\pi $ rad/s in this paper).

Consider the following coordinate transformation
\begin{equation}
\vec V  {e^{j\theta }} = \vec V'  {e^{j{\theta _G}}}\,,		\label{eq:V_transf}
\end{equation}
\begin{equation}
\vec I_C  {e^{j\theta }} = \vec I'_C  {e^{j{\theta _G}}}\,,		\label{eq:I_transf}
\end{equation}
where $\vec V' = {V'_d} + j{V'_q}$ and $\vec I'_C = {I'_{Cd}} + j{I'_{Cq}}$ are the corresponding voltage and current vectors in the global coordinate, $\theta _G$ is the phase of the global coordinate which meets $s\theta _G = \omega_0$. We rewrite \eqref{eq:V_transf} as
$
\vec V'  = \vec V {e^{j{\delta}}}\,,	
$
where $\delta = \theta - \theta_G$, and then linearize it around $\delta_0 = \theta_0 - \theta_{G0}$ and $\vec V_0 = V_{d0} + jV_{q0}$ as 
\begin{equation}\label{eq:voltage_global}
\Delta \vec V' = \Delta \vec V e^{j \delta_0} + j \vec V_0 e^{j \delta_0} \Delta \delta \,.
\end{equation}
The matrix form of~\eqref{eq:voltage_global} is 
\begin{equation}\label{eq:voltage_global1}
\begin{bmatrix} \Delta V'_d \\ \Delta V'_q \end{bmatrix} = e^{J \delta_0}
\left( \begin{bmatrix} \Delta V_d \\ \Delta V_q \end{bmatrix} + \begin{bmatrix} -V_{q0} \\ V_{d0} \end{bmatrix} \Delta \delta \right),
\end{equation}
where $e^{J\delta_0}$ is the matrix form of $e^{j\delta_0}$.
Analogously, for the current signals, we have
\begin{equation}\label{eq:current_global1}
\begin{bmatrix} \Delta I'_{Cd} \\ \Delta I'_{Cq} \end{bmatrix} = e^{J \delta_0}
\left( \begin{bmatrix} \Delta I_{Cd} \\ \Delta I_{Cq} \end{bmatrix} + \begin{bmatrix} -I_{Cq0} \\ I_{Cd0} \end{bmatrix} \Delta \delta \right).
\end{equation}
Note that $\Delta \delta = \Delta \theta$ in small signal dynamics. Hence, we have
\begin{equation}
\Delta \delta  = \frac{{\rm PI}_{\rm PLL}(s) }{s} {\Delta V_q} \,,	
\label{eq:PLL_delta}
\end{equation}
By combining~\eqref{eq:I_V_matrix}, \eqref{eq:voltage_global1}-\eqref{eq:PLL_delta}, we obtain
\begin{equation}\label{eq:Y_PLL}
- \begin{bmatrix}
	{\Delta {I'_{Cd}}}\\
	{\Delta {I'_{Cq}}}
	\end{bmatrix} = {\bf Y}(s) \begin{bmatrix}
	{\Delta {V'_d}}\\
	{\Delta {V'_q}}
	\end{bmatrix}\,,		
\end{equation}
where
\begin{equation}\label{eq:Y_PLL1}
{\bf Y}(s) = e^{J\delta_0} \begin{bmatrix} Y_{11}(s) & \frac{sY_{12}(s)+{\rm PI}_{\rm PLL}(s)I_{Cq0}}{s+{\rm PI}_{\rm PLL}(s)V_{d0}} \\ Y_{21}(s) & \frac{sY_{22}(s) - {\rm PI}_{\rm PLL}(s)I_{Cd0}}{s+{\rm PI}_{\rm PLL}(s)V_{d0}} \end{bmatrix} e^{-J\delta_0} \,.
\end{equation}
Then, we obtain the admittance model of a PLL-based converter in the global coordinate seen from the grid side
\begin{equation}\label{eq:Y_PLL2}
- \begin{bmatrix}
	{\Delta {I'_{d}}}\\
	{\Delta {I'_{q}}}
	\end{bmatrix} = {\bf Y_{\rm GFL}}(s) \begin{bmatrix}
	{\Delta {V'_d}}\\
	{\Delta {V'_q}}
	\end{bmatrix}\,,		
\end{equation}
\begin{equation}\label{eq:Y_PLL3}
{\bf Y_{\rm GFL}}(s) = {\bf Y_{\rm CL}}(s) +  {\bf Y}(s) \,,
\end{equation}
which includes the admittance of the capacitor of {\em LCL}, with ${\bf Y_{\rm CL}}(s)$ being the matrix form of $\vec Y_{\rm CL}(s)$. To simplify the analysis in this paper, we assume that the $d$-axis of the global coordinate is aligned with the $d$-axis of the controller's coordinate at steady state (i.e., $\delta_0 =0$) and that the reactive current $I_{Cq0} \approx 0$. Under these assumptions, we obtain~\eqref{eq:Y_PLL_global}. The corresponding impedance model is ${\bf Z_{\rm GFL}}(s) = {\bf Y^{-1}_{\rm GFL}}(s)$.

The main parameters of the GFL converter are: $L_F = 0.05~{\rm pu}$, $C_F = 0.06~{\rm pu}$, $P^{\rm ref} = 1$, $V^{\rm ref} = 1$, $K_{\rm CCP} = 0.3$, $K_{\rm CCI} = 10$, $K_{\rm VF} = 1$, $T_{\rm VF} = 0.02$, $K_{\rm PCP} = 0.5$, $K_{\rm PCI} = 40$, $K_{\rm VCP} = 2$, $K_{\rm VCI} = 10$, $K_{\rm PLLP} = 27.5$, and $K_{\rm PLLI} = 377.7$.

\section{Admittance Model of GFM Converters}

\label{Appx:C}

\renewcommand{\theequation}{\thesection.\arabic{equation}}

\setcounter{equation}{0}

Consider the control law of an AC voltage control loop
\begin{equation}\label{eq:ac_voltage_control}
\vec I_{\rm ref} = {\rm PI}_{\rm VC}(s) \left( V^{\rm ref} - \vec V \right) + j \omega C_F \vec V + \vec I ,
\end{equation}
where the voltage reference $V^{\rm ref}$ is a real value, i.e., the voltage reference vector is aligned with the $d$-axis.

By combining~\eqref{eq:ac_voltage_control}, \eqref{eq:Cf} and~\eqref{eq:current_track} we obtain
\begin{equation}\label{eq:Y0_GFM_1}
- \Delta \vec I = Y'_0(s) \Delta \vec V ,
\end{equation}
where 
\begin{equation}\label{eq:Y0_GFM0}
Y'_0(s) = \frac{Y_{\rm VF}(s) + G_I(s){\rm PI}_{\rm VC}(s) + sC_F }{1 - G_I(s)} + j\omega C_F .
\end{equation}
If we exclude the admittance of the capacitor of {\em LCL}, then the admittance model (in the controller's $dq$-frame) becomes
\begin{equation}\label{eq:Y0_GFM_2}
- \Delta \vec I_C = Y_0(s) \Delta \vec V ,
\end{equation}
where
\begin{equation}\label{eq:Y0_GFM}
Y_0(s) = \frac{Y_{\rm VF}(s) + G_I(s){\rm PI}_{\rm VC}(s) + G_I(s) s C_F   }{1 - G_I(s)} .
\end{equation}
Note that $sC_F$ appears in the converter's admittance $Y_0(s)$ because the control law~\eqref{eq:ac_voltage_control} introduces the grid-side current $\vec I$ as a feedforward term. The matrix form of~\eqref{eq:Y0_GFM_2} is
\begin{equation}
- \begin{bmatrix}
	\Delta {{I_{Cd}}}\\
	\Delta {{I_{Cq}}}
	\end{bmatrix} = \begin{bmatrix}
	Y_0(s)&0\\
	0&Y_0(s)
	\end{bmatrix} \begin{bmatrix}
	\Delta {{V_d}}\\
	\Delta {{V_q}}
	\end{bmatrix}\,.		\label{eq:I_V_matrix_GFM}
\end{equation}

The small-signal model of the swing equation is 
\begin{equation}\label{eq:swing_dynamics}
\Delta \delta = -\frac{1}{Js^2+Ds} \Delta P_E \,.
\end{equation}
We assume $\delta_0 = 0$ and $I_{Cq0} \approx 0$, and combine~\eqref{eq:voltage_global1}, \eqref{eq:current_global1}, \eqref{eq:I_V_matrix_GFM}, \eqref{eq:swing_dynamics}, and the linearized form of~\eqref{eq:PQ}, leading to 
\begin{equation}\label{eq:Y_GFM}
- \begin{bmatrix}
	{\Delta {I'_{Cd}}}\\
	{\Delta {I'_{Cq}}}
	\end{bmatrix} =  \begin{bmatrix} Y_0(s) & 0 \\ \frac{I_{Cd0}^2 - Y_0^2(s)V_{d0}^2}{Js^2+Ds} & Y_0(s) \end{bmatrix}
	\begin{bmatrix}
	{\Delta {V'_d}}\\
	{\Delta {V'_q}}
	\end{bmatrix}\,.		
\end{equation}
Then, we obtain the admittance model of a GFM converter 
\begin{equation}\label{eq:Y_GFM2}
- \begin{bmatrix}
	{\Delta {I'_{d}}}\\
	{\Delta {I'_{q}}}
	\end{bmatrix} = {\bf Y_{\rm GFM}}(s) \begin{bmatrix}
	{\Delta {V'_d}}\\
	{\Delta {V'_q}}
	\end{bmatrix}\,,		
\end{equation}
and ${\bf Y_{\rm GFM}}(s)$ is given in~\eqref{eq:GFM_global}. The corresponding impedance model is ${\bf Z_{\rm GFM}}(s) = {\bf Y^{-1}_{\rm GFM}}(s)$.

The main parameters of the GFM converter are: $L_F = 0.05~{\rm pu}$, $C_F = 0.06~{\rm pu}$, $P^{\rm ref} = 1$, $V^{\rm ref} = 1$, $K_{\rm CCP} = 0.3$, $K_{\rm CCI} = 10$, $K_{\rm VF} = 1$, $T_{\rm VF} = 0.02$, $K_{\rm VCP} = 2$, $K_{\rm VCI} = 10$, $J = 20$, and $D = 500$.

\bibliographystyle{IEEEtran}

\bibliography{ref}

\begin{thebibliography}{10}
\providecommand{\url}[1]{#1}
\csname url@samestyle\endcsname
\providecommand{\newblock}{\relax}
\providecommand{\bibinfo}[2]{#2}
\providecommand{\BIBentrySTDinterwordspacing}{\spaceskip=0pt\relax}
\providecommand{\BIBentryALTinterwordstretchfactor}{4}
\providecommand{\BIBentryALTinterwordspacing}{\spaceskip=\fontdimen2\font plus
\BIBentryALTinterwordstretchfactor\fontdimen3\font minus
  \fontdimen4\font\relax}
\providecommand{\BIBforeignlanguage}[2]{{%
\expandafter\ifx\csname l@#1\endcsname\relax
\typeout{** WARNING: IEEEtran.bst: No hyphenation pattern has been}%
\typeout{** loaded for the language `#1'. Using the pattern for}%
\typeout{** the default language instead.}%
\else
\language=\csname l@#1\endcsname
\fi
#2}}
\providecommand{\BIBdecl}{\relax}
\BIBdecl

\bibitem{milano2018foundations}
F.~Milano, F.~D{\"o}rfler, G.~Hug, D.~J. Hill, and G.~Verbi{\v{c}},
  ``Foundations and challenges of low-inertia systems,'' in \emph{2018 power
  systems computation conference (PSCC)}.\hskip 1em plus 0.5em minus
  0.4em\relax IEEE, 2018, pp. 1--25.

\bibitem{kroposki2017achieving}
B.~Kroposki, B.~Johnson, Y.~Zhang, V.~Gevorgian, P.~Denholm, B.-M. Hodge, and
  B.~Hannegan, ``Achieving a 100\% renewable grid: {O}perating electric power
  systems with extremely high levels of variable renewable energy,'' \emph{IEEE
  Power and energy magazine}, vol.~15, no.~2, pp. 61--73, 2017.

\bibitem{fan2018wind}
L.~Fan and Z.~Miao, ``Wind in weak grids: 4 {H}z or 30 {H}z oscillations?''
  \emph{IEEE Trans. Power Systems}, vol.~33, no.~5, pp. 5803--5804, 2018.

\bibitem{huang2019grid}
L.~Huang, H.~Xin, Z.~Li, P.~Ju, H.~Yuan, Z.~Lan, and Z.~Wang,
  ``Grid-synchronization stability analysis and loop shaping for {PLL}-based
  power converters with different reactive power control,'' \emph{IEEE Trans.
  Smart Grid}, vol.~11, no.~1, pp. 501--516, 2019.

\bibitem{gu2022power}
Y.~Gu and T.~C. Green, ``Power system stability with a high penetration of
  inverter-based resources,'' \emph{Proceedings of the IEEE}, vol. 111, no.~7,
  pp. 832 -- 853, 2022.

\bibitem{song2022novel}
J.~Song, M.~Cheah-Mane, E.~Prieto-Araujo, and O.~Gomis-Bellmunt, ``A novel
  methodology for effective short-circuit calculation in offshore wind power
  plants considering converter limitations,'' \emph{Electric Power Systems
  Research}, vol. 211, p. 108352, 2022.

\bibitem{wang2020grid}
X.~Wang, M.~G. Taul, H.~Wu, Y.~Liao, F.~Blaabjerg, and L.~Harnefors,
  ``Grid-synchronization stability of converter-based resources—{A}n
  overview,'' \emph{IEEE Open J. Ind. Appl.}, vol.~1, pp. 115--134, 2020.

\bibitem{li2022revisiting}
Y.~Li, Y.~Gu, and T.~C. Green, ``Revisiting grid-forming and grid-following
  inverters: {A} duality theory,'' \emph{IEEE Transactions on Power Systems},
  vol.~37, no.~6, pp. 4541--4554, 2022.

\bibitem{yang2020placing}
C.~Yang, L.~Huang, H.~Xin, and P.~Ju, ``Placing grid-forming converters to
  enhance small signal stability of {PLL}-integrated power systems,''
  \emph{IEEE Trans. Power Systems}, vol.~36, no.~4, pp. 3563--3573, 2021.

\bibitem{zhong2010synchronverters}
Q.-C. Zhong and G.~Weiss, ``Synchronverters: {I}nverters that mimic synchronous
  generators,'' \emph{IEEE Trans. industrial electronics}, vol.~58, no.~4, pp.
  1259--1267, 2010.

\bibitem{d2015virtual}
S.~D’Arco, J.~A. Suul, and O.~B. Fosso, ``A virtual synchronous machine
  implementation for distributed control of power converters in smart grids,''
  \emph{Elect. Power Syst. Res.}, vol. 122, pp. 180--197, 2015.

\bibitem{d2013virtual}
S.~D'Arco and J.~A. Suul, ``Virtual synchronous machines—{C}lassification of
  implementations and analysis of equivalence to droop controllers for
  microgrids,'' in \emph{2013 IEEE Grenoble Conference}.\hskip 1em plus 0.5em
  minus 0.4em\relax IEEE, 2013, pp. 1--7.

\bibitem{simpson2013synchronization}
J.~W. Simpson-Porco, F.~D{\"o}rfler, and F.~Bullo, ``Synchronization and power
  sharing for droop-controlled inverters in islanded microgrids,''
  \emph{Automatica}, vol.~49, no.~9, pp. 2603--2611, 2013.

\bibitem{8495031}
H.~Wu and X.~Wang, ``Design-oriented transient stability analysis of
  grid-connected converters with power synchronization control,'' \emph{IEEE
  Trans. Industrial Electronics}, vol.~66, no.~8, pp. 6473--6482, 2019.

\bibitem{5308285}
L.~Zhang, L.~Harnefors, and H.-P. Nee, ``Power-synchronization control of
  grid-connected voltage-source converters,'' \emph{IEEE Transactions on Power
  Systems}, vol.~25, no.~2, pp. 809--820, 2010.

\bibitem{johnson2013synchronization}
B.~Johnson, S.~Dhople, A.~Hamadeh, and P.~Krein, ``Synchronization of parallel
  single-phase inverters with virtual oscillator control,'' \emph{IEEE Trans.
  Power Electronics}, vol.~29, no.~11, pp. 6124--6138, 2013.

\bibitem{gross2019effect}
D.~Gro{\ss}, M.~Colombino, J.-S. Brouillon, and F.~D{\"o}rfler, ``The effect of
  transmission-line dynamics on grid-forming dispatchable virtual oscillator
  control,'' \emph{IEEE Trans. Control of Network Systems}, vol.~6, no.~3, pp.
  1148--1160, 2019.

\bibitem{rodriguez2023grid}
J.~L. Rodr{\'\i}guez-Amenedo, S.~A. G{\'o}mez, M.~Zubiaga, P.~Izurza-Moreno,
  J.~Arza, and J.~D. Fern{\'a}ndez, ``Grid-forming control of voltage source
  converters based on the virtual-flux orientation,'' \emph{IEEE Access},
  vol.~11, pp. 10\,254--10\,274, 2023.

\bibitem{fan2022review}
B.~Fan, T.~Liu, F.~Zhao, H.~Wu, and X.~Wang, ``A review of current-limiting
  control of grid-forming inverters under symmetrical disturbances,''
  \emph{IEEE Open Journal of Power Electronics}, vol.~3, pp. 955--969, 2022.

\bibitem{huang2017transient}
L.~Huang, H.~Xin, Z.~Wang, L.~Zhang, K.~Wu, and J.~Hu, ``Transient stability
  analysis and control design of droop-controlled voltage source converters
  considering current limitation,'' \emph{IEEE Trans. Smart Grid}, vol.~10,
  no.~1, pp. 578--591, 2017.

\bibitem{d2015small}
S.~D’Arco, J.~A. Suul, and O.~B. Fosso, ``Small-signal modeling and
  parametric sensitivity of a virtual synchronous machine in islanded
  operation,'' \emph{International Journal of Electrical Power \& Energy
  Systems}, vol.~72, pp. 3--15, 2015.

\bibitem{dong2019analysis}
S.~Dong, J.~Jiang, and Y.~C. Chen, ``Analysis of synchronverter
  self-synchronization dynamics to facilitate parameter tuning,'' \emph{IEEE
  Transactions on Energy Conversion}, vol.~35, no.~1, pp. 11--23, 2019.

\bibitem{6200347}
J.~Rocabert, A.~Luna, F.~Blaabjerg, and P.~Rodríguez, ``Control of power
  converters in {AC} microgrids,'' \emph{IEEE Transactions on Power
  Electronics}, vol.~27, no.~11, pp. 4734--4749, 2012.

\bibitem{6022775}
J.~He and Y.~W. Li, ``Analysis, design, and implementation of virtual impedance
  for power electronics interfaced distributed generation,'' \emph{IEEE Trans.
  Industry Applications}, vol.~47, no.~6, pp. 2525--2538, 2011.

\bibitem{9271874}
L.~Huang, H.~Xin, H.~Yuan, G.~Wang, and P.~Ju, ``Damping effect of virtual
  synchronous machines provided by a dynamical virtual impedance,'' \emph{IEEE
  Trans. Energy Conversion}, vol.~36, no.~1, pp. 570--573, 2021.

\bibitem{8767907}
J.~Liu, Y.~Miura, and T.~Ise, ``Fixed-parameter damping methods of virtual
  synchronous generator control using state feedback,'' \emph{IEEE Access},
  vol.~7, pp. 99\,177--99\,190, 2019.

\bibitem{rocabert2012control}
J.~Rocabert, A.~Luna, F.~Blaabjerg, and P.~Rodriguez, ``Control of power
  converters in {AC} microgrids,'' \emph{IEEE Trans. Power Electron.}, vol.~27,
  no.~11, pp. 4734--4749, 2012.

\bibitem{huang2020damping}
L.~Huang, H.~Xin, H.~Yuan, G.~Wang, and P.~Ju, ``Damping effect of virtual
  synchronous machines provided by a dynamical virtual impedance,'' \emph{IEEE
  Trans. Energy Conversion}, vol.~36, no.~1, pp. 570--573, 2020.

\bibitem{liu2016enhanced}
J.~Liu, Y.~Miura, H.~Bevrani, and T.~Ise, ``Enhanced virtual synchronous
  generator control for parallel inverters in microgrids,'' \emph{IEEE
  Transactions on Smart Grid}, vol.~8, no.~5, pp. 2268--2277, 2016.

\bibitem{golestan2014conventional}
S.~Golestan and J.~M. Guerrero, ``Conventional synchronous reference frame
  phase-locked loop is an adaptive complex filter,'' \emph{IEEE Transactions on
  Industrial Electronics}, vol.~62, no.~3, pp. 1679--1682, 2014.

\bibitem{d2013equivalence}
S.~D'Arco and J.~A. Suul, ``Equivalence of virtual synchronous machines and
  frequency-droops for converter-based microgrids,'' \emph{IEEE Transactions on
  Smart Grid}, vol.~5, no.~1, pp. 394--395, 2013.

\bibitem{huang2017virtual}
L.~Huang, H.~Xin, Z.~Wang, K.~Wu, H.~Wang, J.~Hu, and C.~Lu, ``A virtual
  synchronous control for voltage-source converters utilizing dynamics of
  {DC}-link capacitor to realize self-synchronization,'' \emph{IEEE J. Emerg.
  Sel. Topics Power Electron.}, vol.~5, no.~4, pp. 1565--1577, 2017.

\bibitem{arghir2018grid}
C.~Arghir, T.~Jouini, and F.~D{\"o}rfler, ``Grid-forming control for power
  converters based on matching of synchronous machines,'' \emph{Automatica},
  vol.~95, pp. 273--282, 2018.

\bibitem{huang2020h}
L.~Huang, H.~Xin, and F.~D{\"o}rfler, ``H${\infty}$-control of grid-connected
  converters: Design, objectives and decentralized stability certificates,''
  \emph{IEEE Transactions on Smart Grid}, vol.~11, no.~5, pp. 3805--3816, 2020.

\bibitem{sun2011impedance}
J.~Sun, ``Impedance-based stability criterion for grid-connected inverters,''
  \emph{IEEE trans. power electron.}, vol.~26, no.~11, pp. 3075--3078, 2011.

\bibitem{wang2017unified}
X.~Wang, L.~Harnefors, and F.~Blaabjerg, ``Unified impedance model of
  grid-connected voltage-source converters,'' \emph{IEEE Transactions on Power
  Electronics}, vol.~33, no.~2, pp. 1775--1787, 2017.

\bibitem{wen2015analysis}
B.~Wen, D.~Boroyevich, R.~Burgos, P.~Mattavelli, and Z.~Shen, ``Analysis of
  {DQ} small-signal impedance of grid-tied inverters,'' \emph{IEEE Transactions
  on Power Electronics}, vol.~31, no.~1, pp. 675--687, 2015.

\bibitem{rygg2016modified}
A.~Rygg, M.~Molinas, C.~Zhang, and X.~Cai, ``A modified sequence-domain
  impedance definition and its equivalence to the dq-domain impedance
  definition for the stability analysis of {AC} power electronic systems,''
  \emph{IEEE Journal of Emerging and Selected Topics in Power Electronics},
  vol.~4, no.~4, pp. 1383--1396, 2016.

\bibitem{zhang2017generalized}
F.~Zhang, H.~Xin, Z.~Wang, D.~Gan, Q.~Xu, P.~Dai, and F.~Liu, ``Generalized
  short circuit ratio for multi-infeed {LCC-HVDC} systems,'' in \emph{Power and
  Energy Society General Meeting (PESGM)}.\hskip 1em plus 0.5em minus
  0.4em\relax IEEE, 2017, pp. 1--5.

\bibitem{dong2018small}
W.~Dong, H.~Xin, D.~Wu, and L.~Huang, ``Small signal stability analysis of
  multi-infeed power electronic systems based on grid strength assessment,''
  \emph{IEEE Trans. Power Syst.}, vol.~34, no.~2, pp. 1393--1403, 2018.

\bibitem{dorfler2012kron}
F.~Dorfler and F.~Bullo, ``Kron reduction of graphs with applications to
  electrical networks,'' \emph{IEEE Trans. Circuits and Systems I: Regular
  Papers}, vol.~60, no.~1, pp. 150--163, 2012.

\bibitem{du2018survivability}
W.~Du, R.~H. Lasseter, and A.~S. Khalsa, ``Survivability of autonomous
  microgrid during overload events,'' \emph{IEEE Trans. Smart Grid}, vol.~10,
  no.~4, pp. 3515--3524, 2018.

\bibitem{wu2018design}
H.~Wu and X.~Wang, ``Design-oriented transient stability analysis of
  grid-connected converters with power synchronization control,'' \emph{IEEE
  Transa. Ind. Electron.}, vol.~66, no.~8, pp. 6473--6482, 2018.

\bibitem{he2018coordinated}
J.~He, K.~Wu, L.~Huang, H.~Xin, C.~Lu, and H.~Wang, ``A coordinated control
  scheme to realize frequency support of {PMSG}-based wind turbines in weak
  grids,'' in \emph{IEEE Power \& Energy Society General Meeting}, 2018.

\bibitem{harnefors2007modeling}
L.~Harnefors, ``Modeling of three-phase dynamic systems using complex transfer
  functions and transfer matrices,'' \emph{IEEE Trans. Ind. Electron.},
  vol.~54, no.~4, pp. 2239--2248, 2007.

\end{thebibliography}

\end{document}